\definecolor{darkblue}{RGB}{0,0,127} 
\definecolor{darkgreen}{RGB}{0,150,0}
\newtheorem{theorem}{Theorem}
\newtheorem{lemma}[theorem]{Lemma}
\newtheorem{definition}[theorem]{Definition}
\renewenvironment{proof}[1][Proof]{\noindent\textbf{#1.} }{\ $\Box$}
\def\Z{\mathbb{Z}}
\def\R{\mathbb{R}}
\def\C{\mathbb{C}}
\def\e{\mathrm{e}}
\newcommand{\Eref}[1]{Eq.~(\ref{#1})}
\newcommand{\Tref}[1]{Theorem~\ref{#1}}
\newcommand{\Sref}[1]{Sec.~\ref{#1}}
\newcommand{\Fref}[1]{Fig.~\ref{#1}}
\def\th{^{\rm th}}
\def\st{^{\rm st}}
\def\nd{^{\rm nd}}
\DeclareMathOperator{\Tr}{Tr}
\newcommand{\ket}[1]{|{#1}\rangle}
\newcommand{\expect}[1]{\langle{#1}\rangle}
\newcommand{\bra}[1]{\langle{#1}|}
\newcommand{\ketbra}[2]{|{#1}\rangle\!\langle{#2}|}
\newcommand{\proj}[1]{\ketbra{#1}{#1}}
\def\pmap{\mathcal{P}}
\newcommand{\pmapbrac}[1]{\mathcal{P}_{#1}}
\def\canproj{\varpi}
\def\concatproj{\Upsilon}
\def\graph{\Lambda}
\def\pert{\varepsilon}
\def\pr{P}
\def\oconst{\tilde{\Delta}}
\def\hcurl{h}
\def\physicalqudit{code qudit }
\def\physicalspace{code space }
\def\physicalqudits{code qudits }
\def\physicalquditend{code qudit}
\def\physicalspaceend{code space}
\def\physicalquditsend{code qudits}
\def\p{c}
\begin{document}

\title{Perturbative 2-body Parent Hamiltonians for Projected Entangled Pair States}

\author{Courtney G.\ Brell}
\email{courtney.brell@gmail.com}
\affiliation{Centre for Engineered Quantum Systems, School of Physics, The University of Sydney, Sydney, NSW 2006, Australia}
\affiliation{Institut f\"ur Theoretische Physik, Leibniz Universit\"at Hannover, Appelstra\ss e 2, 30167 Hannover, Germany}
\author{Stephen D.\ Bartlett}
\affiliation{Centre for Engineered Quantum Systems, School of Physics, The University of Sydney, Sydney, NSW 2006, Australia}
\author{Andrew C.\ Doherty}
\affiliation{Centre for Engineered Quantum Systems, School of Physics, The University of Sydney, Sydney, NSW 2006, Australia}

\date{11 November 2014}

\begin{abstract}
We construct parent Hamiltonians involving only local 2-body interactions for a broad class of Projected Entangled Pair States (PEPS). Making use of perturbation gadget techniques, we define a perturbative Hamiltonian acting on the virtual PEPS space with a finite order low energy effective Hamiltonian that is a gapped, frustration-free parent Hamiltonian for an encoded version of a desired PEPS. For topologically ordered PEPS, the ground space of the low energy effective Hamiltonian is shown to be in the same phase as the desired state to all orders of perturbation theory. An encoded parent Hamiltonian for the double semion string net ground state is explicitly constructed as a concrete example.
\end{abstract}

\maketitle



\section{Introduction}

Projected Entangled Pair States (PEPS) are a class of quantum states particularly well suited for describing the ground states of interacting quantum many-body systems~\cite{Affleck1988, Fannes1992, Hastings2006, Perez-Garcia2007a, Verstraete2006, Vidal2003a}. They are a form of tensor network ansatz amenable to both numerical and analytical study, and encompass many interesting classes of states. In particular, they offer exact analytical descriptions of such states as the topologically ordered ground states of quantum double models~\cite{KitaevTC97} and string-net models~\cite{LevinSN05}, as well as resources for measurement-based quantum computation such as the cluster states~\cite{Raussendorf2001} and Affleck-Kennedy-Lieb-Tasaki (AKLT) states~\cite{Affleck1988,Wei2011,Darmawan2012}, among others.

For a given PEPS (representing the state of a quantum many-body system defined on a graph), there is an associated parent Hamiltonian for which it is a ground state~\cite{Perez-Garcia2008}. For certain classes of PEPS, these Hamiltonians can be defined using only local interactions (i.e., interactions whose support lies only on qudits within some bounded size region) such that their ground states are unique. Though these interactions act only within a finite sized region, there will still generally be a large number of qudits within this region. For this reason these interactions may be challenging to implement experimentally, and it may be preferable to find an alternative parent Hamiltonian with interactions involving at most two neighbouring quantum systems (2-local interactions), whose ground state is a desired PEPS.

In this paper, we construct such a parent Hamiltonian involving only 2-local interactions for PEPS with certain properties. The strategy we use to show this is based on the perturbation gadget~\cite{Kempe2006, Oliveira2008, Bravyi2008a, Jordan2008} approach. Perturbation gadgets allow $k$-body interactions (those involving $k$ systems) to be approximated by 2-body interactions through the introduction of ancilla qudits coupled perturbatively. Applied to infinite systems, a naive perturbation gadgets approach can encounter a number of pitfalls. In particular, the resource cost of a general perturbation gadget scheme scales poorly with the system complexity, and application of the technique can lead to the energy gap scaling with the system size or the fidelity of target states~\cite{VandenNest-08,Oliveira2008}. Additionally, while interactions can be reduced to only 2-body, the nature of these interactions is in general complicated and unnatural when viewed in terms of the target model. By taking advantage of structure and tailoring the gadgets to a particular class of models, we can circumvent these difficulties. Our construction involves interactions that are natural from the point of view of the PEPS ansatz, and captures the structure of the standard PEPS parent Hamiltonian.

We present a perturbation gadget scheme that works by encoding the qudits of the model in question in a quantum code, and weakly coupling neighbouring encoded qudits. The encodings and couplings are directly inspired by the PEPS descriptions of the target ground states. As such, our scheme is specifically suited to constructing 2-local Hamiltonians whose ground space is (an encoded form of) a desired PEPS. This generalizes the ideas of Refs.~\cite{Bartlett-2006} and \cite{Brell-2011}, where similar techniques were used to reproduce encoded forms of the cluster state and the quantum double ground states, respectively, as the ground states of entirely 2-local Hamiltonians, based on their PEPS descriptions. The model studied in this paper is not precisely equivalent to those developed previously, which take advantage of structure that is not generally available for all the PEPS we discuss here.  As well as the quantum double and cluster states, we expect our construction to apply to broad classes of topologically ordered states with similar structure, such as the string-net ground states, isometric $H$-injective PEPS~\cite{Buerschaper2010a}, and $(G,\omega)$-isometric PEPS~\cite{Buerschaper2013}. In this direction, we argue that isometric MPO-injective PEPS~\cite{Sahinoglu2014} with trivial so-called generalized inverse satisfy the requirements of our construction; this class is known to include string-net ground states and $(G,\omega)$-isometric PEPS. In fact, we conjecture that our results extend to any PEPS that satisfy certain topological order conditions.

Our analysis proceeds in two parts. In the first part, we show that for a given PEPS satisfying certain criteria there exists a finite-order low energy effective Hamiltonian for our system which is a valid (gapped) parent Hamiltonian for the desired PEPS satisfying these conditions. Our perturbation analysis is based on the Schrieffer-Wolff transformation~\cite{Schrieffer1966, Bravyi2011}. In the second part of our analysis, we study the robustness of this effective Hamiltonian to contributions from higher order terms in the perturbation expansion. In doing so, we prove that the effective Hamiltonian is in the same phase as the desired parent Hamiltonian to arbitrary order. 

We make use of stability results for topologically ordered states~\cite{Bravyi2010, Bravyi2010a, Michalakis2011, Cirac2013} to prove that the ground space of our effective Hamiltonian remains in the same phase to arbitrarily high order of perturbation theory. For this reason, our results apply only to states with parent Hamiltonians satisfying the local topological quantum order conditions~\cite{Michalakis2011}. Note that these states need not be topologically ordered in the more colloquial sense, and may have non-degenerate ground spaces, for example.

As an explicit new example of our construction, we demonstrate our procedure for the double semion string-net model~\cite{LevinSN05}, which has an exact PEPS description~\cite{Gu2008,Gu-09,Buerschaper-09}.

In \Sref{s:peps} we will introduce the PEPS formalism and define the class of PEPS to which our construction applies. In \Sref{s:overview} we briefly outline our model and main results. Following this, Sections \ref{s:perturb} and \ref{s:stable} are devoted to the proofs of our results. \Sref{s:discuss} contains a discussion of our results and concluding remarks, followed by an explicit example of our construction in \Sref{s:egds}.


\section{Projected Entangled Pair States}\label{s:peps}

PEPS is an ansatz for describing states of many-body quantum systems. For a given PEPS satisfying certain criteria, we will exploit the structure of this ansatz in order to construct a 2-local Hamiltonian whose ground state is in the same phase as the desired PEPS.

A PEPS is typically associated with a graph or lattice $\graph$, and can be defined constructively by beginning with maximally entangled pairs of qudits of dimension $D$ on each edge $e=(i,j)$ of the graph, such that one qudit from each pair is associated with each of the sites $i$ and $j$. These qudits are conventionally called virtual qudits. A linear map $\pmap_s:\left(\C^D\right)^{\otimes \mathrm{deg}(s)}\rightarrow \C^d$ is then applied to the $\mathrm{deg}(s)$ virtual qudits at each site $s$ of the graph (with $\mathrm{deg}(s)$ the degree of $s$), mapping the combined Hilbert spaces of all the virtual qudits at $s$ to an encoded space of dimension $d$. The space $\C^d$ is often called the physical space, but we will refer to it as the \physicalspaceend, associated with an encoding of a $d$-dimensional qudit in the $\mathrm{deg}(s)$ $D$-dimensional virtual qudits. The map $\pmap_s$ is often referred to as the projection map, though it need not be a projector in the strict sense.

	This structure of a PEPS is illustrated in \Fref{F:PEPS}. In general, the projection map and the dimensions $d$ and $D$ can vary with location, but for notational simplicity we will restrict our attention to the translation-invariant case (extension to the general case is straightforward). We also take the graph $\graph$ to have a bounded coordination number, i.e.~$\mathrm{deg}(s)$ is finite.  Because each edge $e=(i,j)$ of the PEPS graph has an associated pair of virtual qudits, we define $\ket{\Phi_D(e)}\equiv\sum_{k=0}^{D-1}\ket{k}_{e,i}\ket{k}_{e,j}$ as the maximally entangled state on edge $e$, where $\ket{\cdot}_{e,i}$ refers to the state of the virtual qudit at site $i$ corresponding to edge $e$. With this in mind, we can write the PEPS state (up to normalization) as
	\begin{align}\label{e:PEPSdefp}
		\ket{\psi_{\mathrm{PEPS}}}_{\p} &=\prod_{s}\pmap_s\prod_{e}\ket{\Phi_D(e)}_v
	\end{align}
	where $e$ runs over edges of $\graph$, and $s$ runs over sites of $\graph$. Note that $\ket{\psi_{\mathrm{PEPS}}}_{\p}$ is defined on the \physicalspaceend, as opposed to the virtual qudits. We denote states and operators on the \physicalqudits and virtual qudits by subscript $\p$ and $v$, respectively. 
	
	\begin{figure}
	\centering
	\subfloat{
	\includegraphics{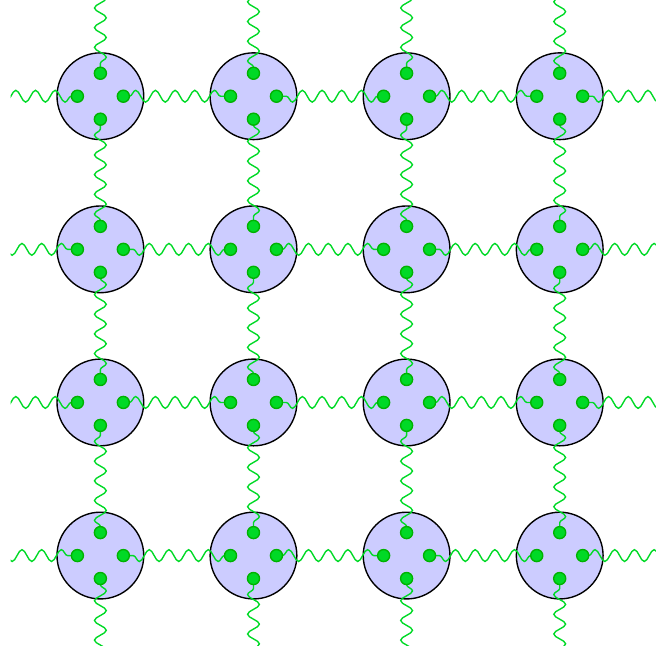}
	}
	\caption{PEPS construction on a square lattice. Virtual systems are shown in green, while \physicalqudits are shown in blue, encoded in the enclosed virtual qudits. Wavy lines run along edges of the PEPS graph, and connect virtual qudits in maximally entangled states.}
	\label{F:PEPS}
	\end{figure}

	A parent Hamiltonian of a PEPS (or more generally any state or space) is a gapped local Hamiltonian that has the desired state as its ground state. It is often conventional to also require that a parent Hamiltonian is frustration-free, but we will still refer to a frustrated Hamiltonian as a parent Hamiltonian so long as it is local and gapped. For any given PEPS, there is a special class of parent Hamiltonians, \emph{canonical} parent Hamiltonians~\cite{Perez-Garcia2008}, defined as follows.
	
	A canonical parent Hamiltonian is specified by a set of regions $\{R\}$ on the PEPS lattice, where $\{R\}$ must contain a region of a large enough size around each site of the lattice. We call the largest required region size $r_*$.  This size $r_*$ generally depends on the details of the PEPS under consideration, but in the cases we consider it can always be taken to be finite (see \cite{Perez-Garcia2008} for details). For each $R$, we define the projector $\canproj_R$ onto the support of $\rho_R = \Tr_{\setminus R}\proj{\psi_{\mathrm{PEPS}}}$, the reduced state in region $R$. The associated canonical parent Hamiltonian is then
	\begin{align}\label{e:canparentHamp}
		H_{\mathrm{can},\p} = -\sum_R \canproj_{R}
	\end{align}
This Hamiltonian will have the desired PEPS as a frustration-free ground state. The Hamiltonian (\ref{e:canparentHamp}) acts on the \physicalqudits of the model (as denoted by the subscript $\p$), and the virtual qudits are seen only as a mathematical tool used in the definition of the PEPS.
	
	\subsubsection{Virtual qudits and \physicalquditsend}	
	
	A PEPS is a state defined on the \physicalqudits as in Eq.~(\ref{e:PEPSdefp}). However, we will consider the virtual qudits to be those that are manipulated in the laboratory and we regard the \physicalqudits to simply be encoded within a subspace of these virtual systems. This is the sense in which we will recover an encoded form of the PEPS.
	
	Generically in perturbation gadget approaches, ancilla qudits are required to mediate effective many-body correlations. In standard approaches, these ancillae are introduced beside the original qudits of the model being considered and simply mediate the interactions between the model qudits. However, in our construction each qudit of the desired model is encoded into several ancilla qudits, and so the additional qudits are integrated into the structure of the model itself. In this way we have aligned the structure of our perturbation gadgets with that of PEPS to reproduce these states more naturally.
	
	 For clarity, for the bulk of our analysis the interactions we use will not generally be 2-local on these virtual systems, but instead be 2-local on the collection of virtual qudits that encodes each \physicalquditend. We call this collection a \emph{code gadget}. We stress this distinction between 2-locality of an interaction with respect to the code gadgets as opposed to the virtual qudits as it is a departure from previous similar work~\cite{Bartlett-2006,Brell-2011}.  We discuss in~\Sref{s:reducecomplex} how to subsequently construct Hamiltonians whose interactions involve at most 2 virtual qudits.

	\subsection{Types of PEPS}\label{s:quasiinj}

	Because we take a perturbative approach in this work, we will construct a parent Hamiltonian for a state within the same phase as a given PEPS, such that the ground state of our model can be made arbitrarily close to the PEPS by taking the perturbation parameter to be small enough.  For this to be a sensible approach, we require parent Hamiltonians for the PEPS under consideration to be gapped and stable (in an appropriate sense) with respect to small perturbations. This criterion is formalized as \emph{topological order}, and will be discussed in \Sref{s:topopeps}. As well as the topological order condition, our procedure will require the PEPS we treat to possess additional structure as compared to the most general definition of PEPS. We are interested in the broadest such structure that will allow us to demonstrate our result.
	
	Apart from topologically ordered PEPS, there are two main subclasses of PEPS that we will need to consider: \emph{isometric} PEPS and \emph{quasi-injective} PEPS.  Isometric PEPS are a natural and important subclass of PEPS, and are renormalization fixed points~\cite{Schuch2010a}. Quasi-injectivity is the least natural of the classes we consider and is mainly a technical tool required in our analysis. We argue that it is a generalization of several known classes of PEPS such as injective PEPS~\cite{Perez-Garcia2008} and $G$-injective PEPS~\cite{Schuch2010}, which have properties that make them amenable to our construction.
	
	\subsubsection{Isometric PEPS}

	\begin{definition}[Isometric PEPS~\cite{Schuch2010a, Schuch2010}]  
	A PEPS is isometric if the projection maps $\pmap_s$ are isometries.
	\end{definition}
	Notably, for isometric PEPS $P_s\equiv\pmapbrac{s}^{\dagger}\pmap_s$ is a (Hermitian, idempotent) projector acting on the virtual qudit space.
	
	Apart from being renormalization fixed-points, isometric PEPS also give a simpler form for the parent Hamiltonian than the general case~\cite{Schuch2010}. In this work, we will require all PEPS we treat to be isometric, as we will make use of the additional structure of their parent Hamiltonians in our analysis. For isometric PEPS, we can write a canonical parent Hamiltonian on the virtual space as
	\begin{align}
		H_{\mathrm{can},v} &= \pmapbrac{\graph}^{\dagger}H_{\mathrm{can},\p}\pmap_\graph\\
		&=-\sum_R\pmapbrac{\graph}^{\dagger}\canproj_R\pmap_\graph\label{e:canparentHamv}
	\end{align}
	where $\pmap_\graph=\prod_{s\in \graph}\pmap_s$. Explicitly, this Hamiltonian will be be a parent Hamiltonian for the encoded PEPS state
	\begin{align}\label{e:PEPSdefv}
		\ket{\psi_{\mathrm{PEPS}}}_v &=\prod_{s}P_s\prod_{e}\ket{\Phi_D(e)}_v
	\end{align}
	defined on the virtual space.
		
	\subsubsection{Quasi-injective PEPS}
	We will define a class of PEPS that we call \emph{quasi-injective} PEPS, inspired by several known classes of PEPS. The most fundamental of these known classes is injective PEPS~\cite{Perez-Garcia2008}, which are technically defined as those PEPS whose projection maps have left inverses. Injectivity has important consequences for properties of the parent Hamiltonian, and in particular injective PEPS can be shown to be unique ground states of their canonical parent Hamiltonians, which can be defined to be $2$-local~\cite{Perez-Garcia2008}. Broader classes of PEPS with similar structure have been proposed, such as $G$-injective PEPS~\cite{Schuch2010} for finite groups $G$, $(G,\omega)$-injective PEPS~\cite{Buerschaper2013} for a finite group $G$ and 3-cocycle $\omega$, and $H$-injective PEPS~\cite{Buerschaper2010a} for finite-dimensional $C^*$ Hopf algebras $H$.  Recently, a notion of injectivity based on the use of a projection matrix product operator (MPO) that includes many (perhaps all) of these previous classes has been developed~\cite{Sahinoglu2014}, known as MPO-injectivity.  MPO-injective PEPS can describe a large class of topologically ordered states including the ground states of string-net models~\cite{LevinSN05}. \color{black} In contrast to injective PEPS, which represent unique ground states of local Hamiltonians, these other classes typically represent the ground states of topologically ordered systems that would generally have degenerate ground spaces.
	
	For all of these classes of PEPS (injective, $G$-injective, etc.), the canonical parent Hamiltonians can be shown to have additional structure that is not present in the general case. In particular, isometric PEPS that are also injective or $G$-injective have canonical parent Hamiltonians (\ref{e:canparentHamp}) whose terms take a particularly simple form:
		\begin{align}\label{e:injectHam}
			\canproj_R= \pmap_R\cdot\left(\prod_{e\in R}\proj{\Phi_D(e)}_v\right)\cdot\pmapbrac{R}^{\dagger}
		\end{align}
		Importantly, the Hamiltonian (\ref{e:canparentHamp}) can be chosen to be both local and gapped for these PEPS~\cite{Perez-Garcia2008,Schuch2010a,Schuch2010}.
		
		Our construction will, among other things, require the PEPS under consideration to have canonical parent Hamiltonians that are local, gapped, and whose terms take the form of Eq.~(\ref{e:injectHam}).  Injective or $G$-injective isometric PEPS have these properties, but we wish to be as general as possible.  We will therefore define the class of quasi-injective PEPS to be those that satisfy the loosest such conditions that are sufficient to prove our main results.  We believe that, for isometric PEPS, our definition of quasi-injectivity generalizes the known classes of PEPS mentioned earlier (injective, $G$-injective, $(G,\omega)$-injective, $H$-injective, and MPO-injective). Loosely speaking, the conditions we impose require that the PEPS is stabilized by a set of operators $\concatproj_{\{R_{P_i},R_{E_i}\}}$ defined below, in the sense that the PEPS is an eigenstate of each $\concatproj_{\{R_{P_i},R_{E_i}\}}$ corresponding to its highest eigenvalue. Explicitly, these (Hermitian) operators take the form
			\begin{align}\label{e:defupsilons}
				\concatproj_{\{R_{P_i},R_{E_i}\}}&= \frac{1}{2}P_{\bigcup_j\{R_{P_j},R_{E_j}\}}\left(\prod_{i}P_{R_{P_i}}\left(\prod_{e\in {R_{E_i}}}\proj{\Phi_D(e)}_v\right)\right)P_{\bigcup_j\{R_{P_j},R_{E_j}\}}+\mathrm{h.c.}
			\end{align}
			where the $R$ are connected regions of the graph and $P_R=\prod_{s\in R}\pmapbrac{s}^{\dagger}\pmap_s$. The set ${\{R_{P_i},R_{E_i}\}}=\{R_{P_1},R_{P_2},\ldots,R_{E_1},R_{E_2},\ldots\}$ is a set of regions, with $\bigcup_j\{R_{P_j},R_{E_j}\}$ their union.
			
			PEPS is a tensor network ansatz in the sense that the projection map can be defined by a tensor with indices corresponding to each virtual and \physicalquditend. In this picture, the operators $\concatproj_{\{R_{P_i},R_{E_i}\}}$ can be thought of as contractions of the tensors defining the PEPS projector in various ways. This is because each $P_s$ can be thought of as a tensor with input and output indices for each virtual qudit, and $\proj{\Phi_D(e)}_v$ acts to contract the relevant indices of these tensors on the sites $e$ connects. In particular, $\concatproj_{\{\emptyset,R\}}=P_R\left(\prod_{e\in R}\proj{\Phi_D(e)}_v\right)P_{R}=\pmapbrac{R}^{\dagger}\canproj_R\pmap_{R}$ corresponds to the contraction of all of the pairs of indices of $P_R$ corresponding to edges in $R$. 
		
		Given these $\concatproj_{\{R_{P_i},R_{E_i}\}}$ operators, we can explicitly define the quasi-injectivity condition as follows:
		\begin{definition}[Quasi-injective PEPS]
			 An isometric PEPS is quasi-injective if
			\begin{align}
					\concatproj_{\{R_{P_i},R_{E_i}\}}\ket{\psi_{\mathrm{PEPS}}}_v&=\eta_{\{R_{P_i},R_{E_i}\}} \ket{\psi_{\mathrm{PEPS}}}_v \,,\label{e:quasidef2}
			\end{align}
			for $\eta_{\{R_{P_i},R_{E_i}\}}$ the largest eigenvalue of $\concatproj_{\{R_{P_i},R_{E_i}\}}$.
		\end{definition}
			
			Most significantly, isometric quasi-injective PEPS have the property that
			\begin{align}\label{e:SHamil}
				H_{\mathrm{par},v}&=-\sum_{\{R_{P_i},R_{E_i}\}}c_{\{R_{P_i},R_{E_i}\}} \concatproj_{\{R_{P_i},R_{E_i}\}} \,,
			\end{align}
			is a valid frustration-free parent Hamiltonian for any choice of $c_{\{R_{P_i},R_{E_i}\}}>0$. This is a direct consequence of the condition (\ref{e:quasidef2}). It is also clear from the fact that $H_{\mathrm{par},v}$ contains all terms in the canonical parent Hamiltonian that if (\ref{e:canparentHamp}) is gapped, the Hamiltonian (\ref{e:SHamil}) is also gapped.
			
	
	We emphasise that the notion of quasi-injectivity is designed to be the loosest notion required for our results to hold, and it is expected (in the isometric case) to encompass the broad class of PEPS listed above, including injective, $G$-injective, $(G,\omega)$-injective, $H$-injective, and MPO-injective PEPS.  To illustrate this, we provide a proof sketch that all MPO-injective PEPS for which the `generalized inverse' (defined in Ref.~\cite{Sahinoglu2014}) is the identity map are quasi-injective, noting that this class includes injective, $G$-injective, $(G,\omega)$-injective and string-net PEPS.  For isometric PEPS, as we consider in this paper, it is believed that all MPO-injective PEPS have a generalized inverse equal to the identity.  We also believe that quasi-injectivity captures the relevant features of higher dimensional analogues of these classes, such as projected entangled pair operator (PEPO)-injective PEPS that are the natural extension of MPO-injective PEPS.  We leave the development of a formal proof of the relationship between quasi-injectivity and other forms of injectivity to future work.  The close relationship between MPO-injective PEPS and topologically ordered PEPS in 2 dimensions may also suggest a close relationship between quasi-injective PEPS and topologically ordered PEPS in general.
	
    \textit{Proof sketch:}  Consider the operator $\concatproj_{\{R_{P_i},R_{E_i}\}}$ acting on an MPO-injective PEPS.  Consider each projection factor of $\concatproj_{\{R_{P_i},R_{E_i}\}}$ as defined in Eq.~\eqref{e:defupsilons} to be applied sequentially.  As we apply projectors on maximally-entangled states associated with some set of bonds on the lattice, this acts to block sites.  This is because for MPO-injective PEPS, the generalised inverse tensor can be considered to be a blocking operation.  For those MPO-injective PEPS with trivial generalized inverse, this simply corresponds to a contraction of the relevant bond, which in our context is implemented by the projection onto the maximally entangled state.  Applying PEPS projectors on some set of sites then removes these sites from any blocks.  The pull-through condition of MPO-injective PEPS ensures that blocks with such sites removed indeed remain blocks. Note that $\concatproj_{\{R_{P_i},R_{E_i}\}}$ is composed of a sequence of such bond projections that block sites followed by PEPS projectors that remove sites from blocks.  At the conclusion of this sequence, the state is described by blocks of sites (not necessarily geometrically local); however, applying the PEPS projector at all sites restores the original PEPS state, thereby guaranteeing that the PEPS is stabilized by all $\concatproj_{\{R_{P_i},R_{E_i}\}}$ as required by quasi-injectivity.  

	\subsubsection{Topological Order}\label{s:topopeps}

	The quasi-injective and isometric conditions discussed above are specific to the PEPS framework. In contrast, the topological order condition applies more generally to frustration-free, gapped, local Hamiltonian systems.  Systems with topological order have inherent stability to quasi-local perturbations (defined below). Since we will be using a perturbative approach, we must consider the effect of high order corrections in the perturbation expansion, and topological stability results will be crucial to establishing the robustness of our results to these corrections. In our context, the topological order condition will be applied to the family of canonical parent Hamiltonians (\ref{e:canparentHamp}) for a given PEPS.
	
	The following results have been developed in a sequence of works~\cite{Bravyi2010, Bravyi2010a, Michalakis2011} on the definition and stability of topologically ordered systems (see also related work on stability of tensor network states~\cite{Szehr2014,Cirac2013}). It is not our intention to provide a complete discussion of topological order, and we will we will simply paraphrase the relevant definitions and results here. We neglect several details (in particular, we restrict our discussion to infinite length scales); interested readers should consult Ref.~\cite{Michalakis2011} for a more thorough treatment and for technical details of these conditions.
	
	\begin{definition}[Local-TQO]
		Consider a gapped Hamiltonian $H_0=\sum_{u\in \graph} L_u$ for some $L_u$ supported in local regions around site $u$ of a graph $\graph$.  Let $P_0(R)$ be the projector to the ground space of the restricted Hamiltonian $H_R=\sum_{u\in R}L_u$ for a region $R\subseteq \graph$.  The system is said to obey \emph{Local-TQO} iff for all operators $X_R$ acting on a finite region $R$, there exists a superpolynomially decaying function $f(r)$ such that
		\begin{align}
			\left\|P_0(R_r)X_R P_0(R_r)-\frac{\Tr P_0(R_r)X_R}{\Tr P_0(R_r)}P_0(R_r)\right\|\leq \|X_R\| f(r)
		\end{align}
		for all finite $r$, where $R_r$ is a region enclosing all points within distance $r$ from $R$ (including $R$ itself).
	\end{definition}	
	\begin{definition}[Local-Gap]
		Given a gapped, local Hamiltonian $H_0$, we say that it obeys the \emph{Local-Gap} condition iff for each $R\subseteq\graph$ and $r\geq 0$, we have that $H_{R_r}$ has gap at least $g(r)$ for $g$ a function decaying at most polynomially in $r$.
	\end{definition}
	
	The Local-TQO condition formalises the colloquial definition of a topologically ordered system as one whose ground states cannot be distinguished by local operations. The Local-Gap condition is required to prove the topological stability results below. 
	
	If at least one canonical parent Hamiltonian for the PEPS satisfies Local-TQO and Local-Gap, we say that the PEPS is topologically ordered. One might be concerned that in general some special choices of canonical parent Hamiltonian will satisfy these conditions while the rest will not. It can easily be seen that if one canonical parent Hamiltonian defined by a set of regions $\{R\}$ satisfies the topological order conditions, then the family of canonical parent Hamiltonians defined by region sets $\{R'\}\supseteq\{R\}$ are also topologically ordered. Thus for large enough sets of regions the Local-TQO and Local-Gap conditions are universal properties of a PEPS, rather than properties of a specific canonical parent Hamiltonian.

We will also require a notion of quasi-locality for operators.  An operator $X$ will be called $(J,\mu)$-quasi-local iff it has a local decomposition $X = \sum_{s\in \graph}\sum_r X_{s,r}$ for $X_{s,r}$ with support only within radius $r$ of site $s$, and $\|X_{s,r}\|\leq J\mu^{r}$ for some $\mu<1$.  

	Given a PEPS that is topologically ordered, we can make use of the topological stability theorem:	
	\begin{theorem}[Topological Stability~\cite{Michalakis2011}]\label{t:topologicalstability}
		Given a frustration-free Hamiltonian $H_0$ with $O(1)$ gap, satisfying the Local-TQO and Local-Gap conditions, there exists $\pert>0$ such that $H=H_0+\pert V$ has spectral gap $O(1)$ for quasi-local $V$.
	\end{theorem}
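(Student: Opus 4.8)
The plan is to follow the spectral-flow (quasi-adiabatic continuation) strategy underlying the stability results of Refs.~\cite{Bravyi2010, Bravyi2010a, Michalakis2011}: interpolate between $H_0$ and $H=H_0+\pert V$ and show that the gap cannot close along the path. First I would define the one-parameter family $H(s)=H_0+s\pert V$ for $s\in[0,1]$ and, using the quasi-locality of $V$, expand $V=\sum_{s'\in\graph}\sum_r V_{s',r}$ into its local pieces with $\|V_{s',r}\|\leq J\mu^{r}$. The frustration-free $O(1)$ gap of $H_0$ gives a well-defined ground-space projector $P_0$, while the Local-Gap condition guarantees that the restricted Hamiltonians $H_{R_r}$ retain a gap $g(r)$ decaying at most polynomially; this is exactly the ingredient needed to build a quasi-adiabatic evolution operator out of local pieces and to track the ground space as $s$ increases.

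Next I would construct, order by order, a quasi-local transformation that removes the block-diagonal (ground-space) part of the perturbation. The central mechanism is the Local-TQO condition: any operator $X_R$ acting on a finite region $R$, once dressed by the ground-space projectors $P_0(R_r)$, acts within the ground space as a multiple of the identity up to an error bounded by $\|X_R\|f(r)$ with $f$ superpolynomially decaying. Applied to each local term $V_{s',r}$ of the perturbation, this shows that the splitting of the ground-space degeneracy induced by that term is controlled by (essentially) $\|V_{s',r}\|\,f(r)$, so that each term contributes only an exponentially-times-superpolynomially small amount to the relative shift within the low-energy sector.

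The heart of the argument is then a bootstrap on the gap. Assuming the gap of $H(s)$ stays above a fixed fraction of the unperturbed gap up to parameter $s$, I would use the spectral-flow unitary to pull the perturbation back and bound both (i) the total splitting inside the ground space and (ii) the upward/downward shift of the bottom of the excited spectrum. Summing the contributions of all local terms over the whole graph, the superpolynomial decay supplied by Local-TQO must defeat the at-most-polynomial growth in the number of regions within radius $r$ together with the at-most-polynomial loss from the Local-Gap function $g(r)$; the resulting sum converges and is controlled by $\pert$ times an $O(1)$ constant uniform in system size. Choosing $\pert$ small enough then closes the bootstrap and forces the gap of $H(1)$ to remain $O(1)$.

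I expect the main obstacle to be exactly this convergence-and-bootstrap step: establishing that the accumulated effect of infinitely many local perturbation terms on \emph{both} the ground-space splitting and the excitation gap stays uniformly small, independent of the system size. This requires carefully matching the superpolynomial decay of the Local-TQO function $f(r)$ against the polynomial factors coming from region counting and from the Local-Gap bound $g(r)$, and threading these estimates through the quasi-adiabatic continuation so that the assumed gap lower bound is genuinely self-consistent rather than circular.
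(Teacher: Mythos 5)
You are attempting to prove a statement that this paper does not itself prove: Theorem~\ref{t:topologicalstability} is imported verbatim from Ref.~\cite{Michalakis2011}, and the authors explicitly defer all technical details there. So the relevant comparison is with the Michalakis--Zwolak proof (and its precursors~\cite{Bravyi2010,Bravyi2010a}), and against that benchmark your sketch has a genuine gap at its core. Your engine is a spectral-flow (quasi-adiabatic continuation) along $H(s)=H_0+s\pert V$ closed by a bootstrap on the gap. Two problems. First, the circularity you flag in your final paragraph is not a technicality to be ``threaded through'' --- the spectral-flow unitary is only defined given a uniform gap along the path, which is the conclusion; a continuity-in-$s$ bootstrap can in principle repair this, but only if you have a mechanism producing bounds that are uniform in system size, and your sketch does not supply one. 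Second, and more fundamentally, the estimates you propose are norm estimates: summing contributions of order $\|V_{u,r}\|f(r)$ over all sites $u\in\graph$ gives a total of order $N$ (the perturbation is extensive), so ``choosing $\pert$ small'' cannot close the bootstrap for the excited-spectrum shift no matter how fast $f$ decays; superpolynomial decay in $r$ beats region counting in $r$, but nothing in your argument beats the volume factor $N$.

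The actual proof avoids both problems by replacing norm bounds with \emph{relative operator bounds}, and it needs no adiabatic continuation to establish the gap. Each local term $V_{u,r}$ is split into a block-diagonal and a block-off-diagonal part with respect to the \emph{local} ground-space projectors $P_0(u,r')$ of slightly enlarged regions. Frustration-freeness together with Local-Gap gives the operator inequality $Q_{u,r'} \leq H_{u,r'}/g(r')$, so the off-diagonal parts are bounded not in norm but by a small multiple of $H_0$ itself --- local energy suppresses local off-diagonal terms, which is exactly what neutralizes extensivity (a Kato-style relative-boundedness argument then shows a relatively bounded perturbation of small relative size cannot close an $O(1)$ gap). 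Local-TQO enters where you correctly placed it, but with a different job: it shows the block-diagonal parts act on the \emph{global} ground space as constants up to superpolynomially small errors, controlling the ground-space splitting. Your sketch identifies the right ingredients (Local-TQO for splitting, Local-Gap, decay-versus-counting), but without the diagonal/off-diagonal decomposition and the relative bound $Q_{u,r}\leq H_{u,r}/g(r)$ the argument as written would fail at the step where the local contributions are summed over the lattice.
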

	
	Note that systems satisfying both Local-TQO and Local-Gap conditions need not have degenerate ground spaces as one might expect for a conventional notion of a topologically ordered system. In particular, canonical parent Hamiltonians of injective PEPS as well as $G$-injective PEPS, etc.~are topologically ordered by this definition. It seems natural to conjecture that the appropriate definition of quasi-injectivity is really equivalent to (or at least implied by) the topological order conditions. Unfortunately it is unclear to us how to prove this conjecture.


\section{Overview of results}\label{s:overview}

	In this section we give an outline of our method and results. Given a suitable PEPS, our goal is to construct a quantum spin model with 2-body interactions that is a parent Hamiltonian for a state within the same phase as this PEPS. We will first describe the form of the Hamiltonian by which we achieve this, before stating our main theorems. Sections \ref{s:perturb} and \ref{s:stable} will be devoted to proving these theorems.

	\subsection{Construction}\label{s:model}

	Our strategy will be to use a perturbative Hamiltonian to simulate the different elements of the PEPS construction. In contrast to a conventional PEPS parent Hamiltonian, our model acts on the virtual qudit space, as opposed to the \physicalqudit space. The unperturbed dynamics of our model will be such that the ground space of our system is an encoded form of the relevant PEPS \physicalspaceend.

	Our main tool is the code gadget~\cite{Brell-2011}, which can be understood as the collection of virtual qudits at each site of the PEPS lattice, together with a Hamiltonian whose ground space is the desired code space (i.e.~an encoding of a physical qudit in the PEPS language). From this point on all operators act on the virtual qudits unless otherwise stated, and so we will suppress $v$ subscripts on operators and states. The encoding of the $d$-dimensional \physicalqudit in the virtual space is given (up to 1-local unitaries) by the projection map $\pmap_s$. The simplest way to achieve this encoding in the ground space of a code gadget is by using the Hamiltonian
	\begin{align}
		Q_s&=1-\pr_s
	\end{align}
	with $\pr_s\equiv \pmapbrac{s}^{\dagger}\pmap_s$ the projector to the PEPS \physicalspace (for an isometric PEPS). Each of these code gadgets therefore corresponds to a single \physicalquditend.
	
	Note that the $Q_s$ act on a single code gadget, but this corresponds to $\mathrm{deg}(s)$ virtual qudits. Thus, in terms of the virtual qudits of the model, this is a $\mathrm{deg}(s)$-body interaction. We will generally analyse this model as written, and in \Sref{s:reducecomplex} discuss how to reduce the interactions from $\mathrm{deg}(s)$-body to 2-body on the virtual qudits if required.

	We couple the code gadgets perturbatively according to the structure of the PEPS lattice $\graph$, and this coupling will mediate the correlations present in the PEPS. For each edge $e$ of the PEPS lattice, we define a coupling term
	\begin{align}
		M_e &= \proj{\Phi_D(e)}
	\end{align}
	with $M$ the projector to the maximally entangled state of the relevant virtual space dimension $D$.
	
	The Hamiltonian of our system is then given by
	\begin{align}\label{e:fullHami}
		H&=\sum_{s\in \graph} Q_s-\pert\sum_e M_e
	\end{align}
	where $\pert \ll 1$. This is a 2-body Hamiltonian (considering each code gadget to be a single particle), and we will show that it is a valid parent Hamiltonian for a state in the same phase as the desired isometric, quasi-injective, topologically ordered PEPS.
	
	A simple example of this construction is analysed in \Sref{s:egds}. The analysis uses a simplified formalism and is much more accessible than the main technical sections; some readers may wish to read it before tackling the technical issues that are required to treat the general case.

\subsection{Results}

	Our analysis of the model described above proceeds in several stages. The main idea is to compute a perturbation expansion in $\pert$ for a low-energy effective Hamiltonian of the system and analyse its properties. To achieve this, we use the global Schrieffer-Wolff perturbation method~\cite{Schrieffer1966,Bravyi2011} as described in \Sref{s:perturb}. We find the following result:
	
	\begin{theorem}\label{t:nthorder}
		Given an isometric, quasi-injective PEPS with gapped canonical parent Hamiltonian, there exists finite $n_*$ such that the global Schrieffer-Wolff effective Hamiltonian for our model to order $n_*$ in the perturbation parameter $\pert$ is a frustration-free parent Hamiltonian for the PEPS and has gap $O(\pert^{n_*})$ for sufficiently small $\pert>0$.
	\end{theorem}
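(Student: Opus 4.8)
The plan is to apply the global Schrieffer--Wolff (SW) transformation to $H=H_0+\pert V$ with $H_0=\sum_{s\in\graph}Q_s$ and $V=-\sum_e M_e$, and to show that the effective Hamiltonian, truncated at a suitable finite order, equals---up to an additive constant---a Hamiltonian of the form \Eref{e:SHamil}. First I would record the unperturbed data. Because each virtual qudit belongs to a single site, the projectors $\pr_s=\pmapbrac{s}^{\dagger}\pmap_s$ act on disjoint sets of qudits, so the $Q_s=1-\pr_s$ mutually commute: $H_0$ is frustration-free, its ground-space projector is $\Pi=\prod_s\pr_s$, and its gap equals $1$ (one unsatisfied $Q_s$ costs unit energy), independent of $\pert$ and of system size. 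Note $\ket{\psi_{\mathrm{PEPS}}}$ lies in the image of $\Pi$, since $\pr_s\ket{\psi_{\mathrm{PEPS}}}=\ket{\psi_{\mathrm{PEPS}}}$ for all $s$. This $O(1)$ gap is what makes the SW series converge for small enough $\pert$ and guarantees that the effective Hamiltonian $H_{\mathrm{eff}}=\sum_{k\geq 0}\pert^k H_{\mathrm{eff}}^{(k)}$ lives on the image of $\Pi$, the encoded \physicalspaceend.

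The core of the argument is structural. Each SW term $H_{\mathrm{eff}}^{(k)}$ is a sum of products of $k$ factors of $V$ interleaved with the reduced resolvent $G_0=-(1-\Pi)H_0^{-1}(1-\Pi)$ and bracketed by $\Pi$. Expanding $V=-\sum_e M_e$ and writing $H_0^{-1}(1-\Pi)$ through the commuting projectors $\pr_s$, every such product reduces to a sum of alternating strings of on-site code projectors $\pr_s$ and edge projectors $M_e=\proj{\Phi_D(e)}$ conjugated by outer projectors---precisely the ingredients of the operators $\concatproj_{\{R_{P_i},R_{E_i}\}}$ of \Eref{e:defupsilons}. Since $V$ is a sum of strictly local terms and $G_0$ is built from the local $H_0$, a linked-cluster/locality argument shows that at order $k$ each surviving product is supported on a connected cluster of at most $k$ edges. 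I would then collect these contributions, restricted to the image of $\Pi$, into a finite combination $H_{\mathrm{eff}}^{(\leq n_*)}=\mathrm{const}-\sum_{\{R_{P_i},R_{E_i}\}}c_{\{R_{P_i},R_{E_i}\}}\concatproj_{\{R_{P_i},R_{E_i}\}}$, tracking the $\pert$-dependence of every coefficient.

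With the terms identified, quasi-injectivity supplies frustration-freeness for free: \Eref{e:quasidef2} states that $\ket{\psi_{\mathrm{PEPS}}}$ is an eigenvector of every $\concatproj_{\{R_{P_i},R_{E_i}\}}$ at its largest eigenvalue, so once the $c_{\{R_{P_i},R_{E_i}\}}$ are nonnegative the PEPS simultaneously minimizes every term and is a frustration-free ground state of $H_{\mathrm{eff}}^{(\leq n_*)}$. To fix $n_*$, I would use that the canonical parent Hamiltonian \Eref{e:canparentHamp} is specified by regions of bounded size $r_*$, and that the SW expansion first produces the operator $\concatproj_{\{\emptyset,R\}}=\pmapbrac{R}^{\dagger}\canproj_R\pmap_{R}$ only at the order equal to the number of edge projectors $M_e$ needed to contract all bonds internal to $R$. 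Since $r_*$ is finite, taking $n_*$ to be the largest such count over the required regions ensures that $H_{\mathrm{eff}}^{(\leq n_*)}$ contains, at order $\pert^{n_*}$, a complete set of terms equivalent to those of a gapped parent Hamiltonian of the form \Eref{e:SHamil}, whose gap it then inherits after rescaling by $\pert^{n_*}$.

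The step I expect to be the main obstacle is the quantitative gap bound together with the positivity of the coefficients $c_{\{R_{P_i},R_{E_i}\}}$. I must show that the lower-order terms $H_{\mathrm{eff}}^{(k)}$ with $k<n_*$, which are frustration-free but need not single out the PEPS uniquely, neither close the gap nor frustrate the ground state, and that the order-$n_*$ contributions lift the residual degeneracy by $\Theta(\pert^{n_*})$ rather than something smaller. Concretely I would aim for a bound $H_{\mathrm{eff}}^{(\leq n_*)}+\mathrm{const}\succeq c\,\pert^{n_*}(1-\Pi_{\mathrm{PEPS}})$ on the image of $\Pi$, with $\Pi_{\mathrm{PEPS}}$ the projector onto the PEPS ground space, by comparing $H_{\mathrm{eff}}^{(\leq n_*)}$ against a positive multiple of $H_{\mathrm{par},v}$ and invoking the gap of the canonical parent Hamiltonian discussed in \Sref{s:quasiinj}. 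Checking that the SW recursion actually delivers nonnegative coefficients up to order $n_*$ (or that any wrong-sign pieces are subleading or cancel), while keeping the whole argument order-by-order rather than leaning on convergence of the full series, is the delicate combinatorial part where I anticipate the real work.
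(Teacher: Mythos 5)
Your structural outline (SW terms as alternating strings of $M_e$ and reduced resolvents, linked-cluster locality bounding supports at order $k$ by clusters of $k$ edges, $n_*\sim O(r_*)$ fixed by the regions of the canonical parent Hamiltonian, and quasi-injectivity supplying frustration-freeness \emph{once} the coefficients are nonnegative) matches the paper's skeleton. But the step you defer as ``the delicate combinatorial part'' --- positivity of the coefficients $c_{\{R_{P_i},R_{E_i}\}}$ and the $\Theta(\pert^{n_*})$ gap --- is precisely where the paper's entire proof lives, and your proposal contains no mechanism for it; as stated it would fail. Concretely: expanding the reduced resolvent restricted to a region $R$ in the eigenprojectors of the commuting-projector Hamiltonian $H_0$ gives \Eref{e:h0expandexplicit} with $\oconst=0$, whose $P_0$ component carries the coefficient $-\hcurl_{|R|}/\Delta_0<0$. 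Hence the naive decomposition of the SW terms into the operators $\concatproj_{\{R_{P_i},R_{E_i}\}}$ of \Eref{e:defupsilons} produces strictly negative coefficients on some terms, and ``checking that the SW recursion delivers nonnegative coefficients'' is simply false without modification. The paper's device is the identity $\frac{Q_0}{H_0}=Q_0\tilde{g}$ with $\tilde{g}=\oconst P_0+\frac{Q_0}{H_0}$ for an \emph{arbitrary} constant $\oconst$: the full effective Hamiltonian is $\oconst$-independent, but splitting it into the part with all resolvent exponents $q_i=1$, Eq.~(\ref{e:restrictham}), and a remainder $\tilde{H}_{\mathrm{else}}$, and then choosing $\oconst>\hcurl_{|R|}/\Delta_0$, makes every coefficient in the restricted part positive (Lemma \ref{l:oconstlemma}, via the explicit combinatorial evaluation (\ref{e:h0expandexplicit}), which exploits the equally spaced spectrum of $H_0$). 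That restricted part then has exactly the form \Eref{e:SHamil} and is a gapped, frustration-free parent Hamiltonian by quasi-injectivity.

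Your fallback for the wrong-sign pieces --- that they are ``subleading or cancel,'' or an operator-inequality comparison of $H_{\mathrm{eff}}^{(\leq n_*)}$ against a positive multiple of $H_{\mathrm{par},v}$ --- also does not work as stated: the neglected terms neither cancel nor are uniformly small in norm relative to the order-$n_*$ gap-setting terms. What saves them in the paper is structure, not size alone: every neglected $\Gamma(q_1,\ldots,q_j)$ has some $q_i\leq 0$, and $g_q\propto P_0$ there, so each such term \emph{factorizes} into a product of all-ones strings, i.e., terms $P_0\concatproj\,\concatproj' P_0$ appearing at order $\pert^{j+k}$ --- one power of $\pert$ higher than the corresponding single-$\concatproj$ terms, since $\pert^{j},\pert^{k}\geq\frac{1}{\pert}\pert^{j+k}$. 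Because quasi-injectivity makes the PEPS a common top eigenvector of all the $\concatproj$ operators (and hence of their products), the elementary eigenvalue-perturbation Lemma \ref{l:gsgen} applied bracketwise in Eq.~(\ref{e:expandedlocsweffham}) shows that for sufficiently small $\pert$ (with the critical value set in part by $\oconst$) the ground space, frustration-freeness, and $O(\pert^{n_*})$ gap of the restricted Hamiltonian survive the restoration of $\tilde{H}_{\mathrm{else}}$. Without both ingredients --- the $\oconst$-shift of the resolvent and the factorization-plus-Lemma-\ref{l:gsgen} absorption of the wrong-sign products --- your outline has no route to the positivity and gap bound that you correctly identified as the crux.
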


	This theorem is proved in \Sref{s:perturb} and is the most crucial part of our analysis. 
	
	Beyond this result, we would like to demonstrate that the full Hamiltonian (\ref{e:fullHami}) is gapped and has ground state in the same phase as the desired PEPS.  In this direction, we analyse the stability of the gap of the $n_*\th$ order effective Hamiltonian to the addition of higher order terms in the perturbation expansion.  Such stability would guarantee that the $n_*\th$ order effective Hamiltonian is adiabatically connected to the full effective Hamiltonian at arbitrary order in perturbation theory, and so their ground states are in the same phase. 
	
	In order to guarantee any kind of stability against the additional contribution from higher order terms in the perturbation expansion, we appeal to known results for topologically ordered systems~\cite{Michalakis2011,Bravyi2010,Cirac2013}. If a given PEPS is topologically ordered as well as isometric and quasi-injective, then we can use Theorem (\ref{t:nthorder}) together with the local Schrieffer-Wolff perturbation method~\cite{Datta1996,Bravyi2011} to demonstrate the following theorem:

	\begin{theorem}\label{t:stability}
		Given an isometric, quasi-injective, topologically ordered PEPS, there exists $\pert>0$ such that the effective Hamiltonian for our model to any order $k>n_*$ is in the same phase as the $n_*\th$ order effective Hamiltonian.
	\end{theorem}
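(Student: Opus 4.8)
The plan is to recast the passage from the $n_*\th$ order effective Hamiltonian to the order-$k$ one as a quasi-local perturbation and to invoke the Topological Stability Theorem (\Tref{t:topologicalstability}). Writing the order-$k$ truncation of the Schrieffer--Wolff expansion as
\begin{align}
H_{\mathrm{eff}}^{(k)} = H_{\mathrm{eff}}^{(n_*)} + \Delta^{(k)},
\end{align}
where $\Delta^{(k)}$ collects all contributions of order $\pert^{n_*+1}$ through $\pert^{k}$, \Tref{t:nthorder} tells us that $H_{\mathrm{eff}}^{(n_*)}$ is a frustration-free parent Hamiltonian for the PEPS with gap $O(\pert^{n_*})$. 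The task therefore reduces to showing that $\Delta^{(k)}$ is a sufficiently small, quasi-local perturbation that cannot close this gap, so that the interpolation $H_{\mathrm{eff}}^{(n_*)}+t\Delta^{(k)}$, $t\in[0,1]$, is a gapped path joining the two Hamiltonians.

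Since \Tref{t:topologicalstability} is phrased for an unperturbed Hamiltonian with $O(1)$ gap, I would first rescale by $\pert^{-n_*}$, setting $\tilde{H}_0 = \pert^{-n_*}H_{\mathrm{eff}}^{(n_*)}$ and $\tilde{V} = \pert^{-n_*}\Delta^{(k)}$. Then $\tilde{H}_0$ is frustration-free, has $O(1)$ gap, and has the encoded PEPS as its ground space. To apply the stability theorem I must verify that $\tilde{H}_0$ obeys Local-TQO and Local-Gap; this is where topological order of the PEPS is used. The order-$n_*$ effective Hamiltonian delivered by the global Schrieffer--Wolff expansion consists of local terms that stabilize the PEPS and reproduce the parent-Hamiltonian structure of Eq.~(\ref{e:SHamil}), which in turn contains all canonical parent Hamiltonian terms (\ref{e:canparentHamp}). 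Invoking the observation made earlier that enlarging the set of regions preserves the topological order conditions, and that these conditions are universal properties of the PEPS for large enough region sets, I would conclude that $\tilde{H}_0$ inherits Local-TQO and Local-Gap from a canonical parent Hamiltonian witnessing the topological order of the PEPS.

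The crux is controlling $\tilde{V}$. Its overall strength is $O(\pert)$, since the leading correction in $\Delta^{(k)}$ is order $\pert^{n_*+1}$ and the rescaling removes only $\pert^{n_*}$; hence $\tilde{V}=\pert\,V'$ with $V'$ of bounded strength, and $\pert$ can be taken arbitrarily small. The more delicate requirement is that $V'$ be $(J,\mu)$-quasi-local with $\mu<1$ bounded away from $1$ uniformly in $\pert$ and in $k$. This is precisely why the \emph{local} Schrieffer--Wolff method must replace the global one: whereas the global transformation produces the effective Hamiltonian only as an operator on the low-energy subspace, the local version furnishes an explicit decomposition of each order into terms whose operator norms decay superpolynomially in the radius of their support. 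I expect this quasi-locality estimate to be the main obstacle, since successive orders of the perturbation series generate operators on regions of growing diameter, and one must show their norms decay fast enough to meet the quasi-locality hypothesis of \Tref{t:topologicalstability}; because the local Schrieffer--Wolff series converges for small $\pert$, the quasi-local norm of $\tilde{V}$ is then bounded uniformly in $k$, so a single threshold $\pert$ works for every order.

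With these ingredients in place, \Tref{t:topologicalstability} yields a threshold $\pert>0$ below which $\tilde{H}_0+s\,V'$ retains an $O(1)$ gap for all $|s|\leq\pert$, and in particular for $s=t\pert$ with $t\in[0,1]$. Undoing the rescaling, $H_{\mathrm{eff}}^{(n_*)}+t\Delta^{(k)}$ is gapped along the entire interpolation, giving a gapped path from $H_{\mathrm{eff}}^{(n_*)}$ to $H_{\mathrm{eff}}^{(k)}$. Consequently the two Hamiltonians lie in the same phase, and since the threshold $\pert$ is independent of $k$, this holds for every order $k>n_*$, as claimed.
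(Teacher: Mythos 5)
Your high-level intuition (cast the higher orders as a quasi-local perturbation and invoke \Tref{t:topologicalstability}) matches the paper's, but your decomposition $H_{\mathrm{eff}}^{(k)} = H_{\mathrm{eff}}^{(n_*)} + \Delta^{(k)}$ presumes a \emph{single} expansion whose order-$n_*$ truncation enjoys the conclusions of Theorem~\ref{t:nthorder} \emph{and} whose tail $\Delta^{(k)}$ is quasi-local — and no one expansion delivers both. Theorem~\ref{t:nthorder} is proved for the \emph{global} Schrieffer--Wolff effective Hamiltonian, whose higher-order terms, as the paper stresses, cannot easily be bounded; the \emph{local} SW expansion has the explicit locality bounds, but its order-$n_*$ effective Hamiltonian is not known to be a gapped parent Hamiltonian, nor even frustration-free, and frustration-freeness is a hypothesis of \Tref{t:topologicalstability}. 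You note that the local method ``must replace'' the global one but never bridge the two. The paper's first stage (Lemma~\ref{l:globaltolocal}) is devoted precisely to this bridge: it shows the discrepancy $\hat{\delta} = H^{\expect{n_*}}_{\mathrm{eff,glob}} - e^{K^{\expect{n_*}}} H^{\expect{n_*}}_{\mathrm{eff,loc}} e^{-K^{\expect{n_*}}}$ is quasi-local with strength tunable below the $O(\pert^{n_*})$ gap (Lemmas~\ref{l:deltaquasilocal} and~\ref{l:altdecaybound}), applies topological stability to the \emph{global} Hamiltonian (the only one known to be frustration-free), removes the rotation via Lemma~\ref{l:unrotate}, and then uses Lemma~\ref{l:localstable} to transfer topological stability to the local SW Hamiltonian despite its unproven frustration-freeness. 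Without these steps your invocation of the stability theorem has no valid base point.

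A second genuine gap is your uniformity-in-$k$ claim: ``because the local Schrieffer--Wolff series converges for small $\pert$, the quasi-local norm of $\tilde{V}$ is bounded uniformly in $k$.'' The available estimate is $\|V^{(j)}_{\mathrm{loc}}\|_{\mathrm{max}} \leq \alpha\left(n^2/(\Delta_0\beta)\right)^j$, which ties the growth rate to the order $n$ of the transformation, so a truncation-difference taken at order $k$ has an effective radius of convergence shrinking like $k^{-2}$; no uniform-in-$k$ bound follows. The paper sidesteps this entirely: it fixes the order-$n_*$ transformation once, shows its full garbage term $H_{\mathrm{garbage}} = \sum_{j>n_*}\pert^j V^{(j-1)}_{\mathrm{loc}}$ is $\left(O(\pert^{n_*+\delta}), O(\pert^{\lambda})\right)$-quasi-local for $\lambda < \tfrac{1}{n_*+2}$, absorbs it with a single $k$-independent threshold, and then reaches any order $k$ through the exact identity $P_0 e^{T^{\expect{k}}} H e^{-T^{\expect{k}}} P_0 = P_0 e^{T^{\expect{k}}} e^{-T^{\expect{n_*}}}\left(H^{\expect{n_*}}_{\mathrm{loc}} + H^{\expect{n_*}}_{\mathrm{garbage}}\right) e^{T^{\expect{n_*}}} e^{-T^{\expect{k}}} P_0$ together with Lemma~\ref{l:unrotate} — conjugation by local Hamiltonians costs nothing in $\pert$, which is what makes the threshold independent of $k$. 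Note also the paper's explicit caveat that the order-$k$ object connected this way is the modified effective Hamiltonian $H^{\expect{k}}_{\mathrm{eff,loc+}}$ including its garbage term, not the bare truncation your $\Delta^{(k)}$ implicitly assumes.
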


	Here we define two ground states of gapped quasi-local Hamiltonians $H_1$ and $H_2$ to be in the same phase iff $H_1$ can be connected to $H_2$ by a quasi-local adiabatic evolution that does not close the spectral gap. Theorems \ref{t:nthorder} and \ref{t:stability} straightforwardly imply the following theorem, which is the main result of this paper.
	
	\begin{theorem}
		There exists $\pert>0$ such that the low-energy effective Hamiltonian corresponding to the system (\ref{e:fullHami}) is a gapped parent Hamiltonian for a state in the same phase as the quasi-injective, isometric, topologically ordered PEPS under consideration to any order $k>n_*$ of perturbation theory.
	\end{theorem}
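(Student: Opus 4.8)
The plan is to obtain this statement as a direct corollary of Theorems \ref{t:nthorder} and \ref{t:stability}, which together supply exactly the two ingredients required: a gapped, frustration-free parent Hamiltonian for the PEPS at finite order, and stability of this property to all higher orders. First I would reconcile the perturbation parameter. Theorem \ref{t:nthorder} furnishes a threshold below which the $n_*\th$ order effective Hamiltonian $H_{\mathrm{eff}}^{(n_*)}$ is a frustration-free parent Hamiltonian for the PEPS with gap $O(\pert^{n_*})$, while Theorem \ref{t:stability} furnishes a (possibly smaller) threshold below which every higher-order effective Hamiltonian lies in the same phase as $H_{\mathrm{eff}}^{(n_*)}$. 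Taking $\pert>0$ smaller than both thresholds guarantees that both conclusions hold simultaneously for the system \Eref{e:fullHami}.

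Next I would assemble the conclusion. By Theorem \ref{t:nthorder}, the PEPS is the frustration-free ground state of $H_{\mathrm{eff}}^{(n_*)}$, and this Hamiltonian is gapped. By Theorem \ref{t:stability}, for any $k>n_*$ the effective Hamiltonian $H_{\mathrm{eff}}^{(k)}$ is in the same phase as $H_{\mathrm{eff}}^{(n_*)}$, meaning it is connected to it by a quasi-local adiabatic evolution that does not close the spectral gap. Since this path keeps the gap open and $H_{\mathrm{eff}}^{(n_*)}$ is gapped, $H_{\mathrm{eff}}^{(k)}$ is itself gapped, and its ground state is the image of the PEPS under that evolution. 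Because the gap never closes, the ground state of $H_{\mathrm{eff}}^{(k)}$ is by definition in the same phase as the PEPS, so $H_{\mathrm{eff}}^{(k)}$ is a gapped parent Hamiltonian for a state in the same phase as the desired PEPS, which is precisely the claim.

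The only genuine subtlety, and where I would concentrate care, is the bookkeeping of the ``same phase'' relation and its quantifiers. One must verify that the relation defined via gap-non-closing quasi-local adiabatic evolution is genuinely transitive, so that ``same phase as $H_{\mathrm{eff}}^{(n_*)}$'' combined with ``$H_{\mathrm{eff}}^{(n_*)}$ has the PEPS as ground state'' yields ``same phase as the PEPS''. Concretely, the step most likely to need explicit justification is concatenating the adiabatic path from Theorem \ref{t:stability} with the (trivial) path witnessing that the PEPS is the ground state of $H_{\mathrm{eff}}^{(n_*)}$, and checking that the composite evolution remains quasi-local and gap-preserving. Everything else reduces to aligning the two thresholds on $\pert$ and reading off the gapped-parent-Hamiltonian conclusion; I do not anticipate any hard estimate here, since the substantive analytic work has already been carried out in the proofs of Theorems \ref{t:nthorder} and \ref{t:stability}.
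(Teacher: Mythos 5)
Your proposal matches the paper exactly: the paper derives this theorem as an immediate corollary of Theorems~\ref{t:nthorder} and~\ref{t:stability}, stating that they ``straightforwardly imply'' it, which is precisely your route of intersecting the two $\pert$-thresholds and invoking the gap-preserving adiabatic connection. Your extra care about transitivity of the same-phase relation is a sensible elaboration of a step the paper treats as immediate, and nothing in it conflicts with the paper's argument.
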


	Our proof of \Tref{t:stability} is given in \Sref{s:stable} and involves two stages. We first transform the effective Hamiltonian derived from the global Schrieffer-Wolff method to one defined by the related local Schreiffer-Wolff method~\cite{Datta1996,Bravyi2011}. Although the global SW expansion has structure which allows for the proof of Theorem \ref{t:nthorder}, the higher-order terms in its expansion cannot easily be bounded. Conversely, the local SW expansion has explicit locality properties that allow us to analyse higher-order terms, but does not allow for a direct proof of Theorem \ref{t:nthorder}. By transforming between these two effective Hamiltonian expansions, we are able to make use of the convenient features of both. This transformation between the global and local Schrieffer-Wolff effective Hamiltonians can be treated as a quasi-local perturbation, and is thus guaranteed to preserve the gap by the topological stability theorem.
	
	Once we have demonstrated that the local SW effective Hamiltonian is in the same phase as the global SW effective Hamiltonian, we show that the higher order contributions to the local SW effective Hamiltonian can also be treated as quasi-local perturbations on the $n_*\th$ order Hamiltonian and so will not induce a phase transition. A caveat to this statement that will be made clear in the analysis is that we use a slightly modified effective Hamiltonian as compared to the standard definition of the local SW expansion. The composition of each of these results defines an adiabatic path from the $n_*\th$ order global SW effective Hamiltonian to our effective Hamiltonian at arbitrary finite order, proving \Tref{t:stability}.
	
	In addition, one would ideally like to show stability against contributions from the excited space of the unperturbed Hamiltonian, which are neglected in the effective Hamiltonian. While we expect that it may be possible to prove this kind of rigorous result using similar tools to those used here, the bounds from Ref.~\cite{Bravyi2011} on the size of these additional high-energy terms are insufficient for this purpose, and so a complete proof of stability against such terms is beyond the scope of this paper. Our analysis demonstrates that the low energy effective Hamiltonian of our model is a parent Hamiltonian for the desired state, but it does not prove that this effective Hamiltonian is a good description of the low energy physics of our system (i.e.,~that perturbation theory is accurate in this regime).  While in principle, states from the excited space of the unperturbed Hamiltonian could contribute to the low energy physics of our model, we do not know of any examples for which such unusual behaviour occurs.  We leave the investigation of such breakdowns of perturbation theory to future work.


\section{Perturbation Analysis}\label{s:perturb}

	\subsection{Preliminaries}\label{s:perturbintro}
	
	As in \Sref{s:model}, we define our system by a Hamiltonian of the form $H=H_0+\pert V$ with
	\begin{align}
		H_0&=\sum_{s\in \graph} Q_s\;,\qquad
		V=-\sum_e M_e \label{e:unpertHami}
	\end{align}	
	where $\pert \ll 1$. The projector to the ground space of the unperturbed Hamiltonian $H_0$ is defined as $P_0=P_\graph=\prod_sP_s$, and $P_0 H_0=H_0 P_0 =0$. It will also be convenient to define the projector to the unperturbed excited space $Q_0\equiv1-P_0$. Let $\Delta_0$ be the gap of $H_0$ and note that $\Delta_0=1$. We define $N\equiv|\graph|$ to be the total number of sites of the PEPS graph.
	
	We can motivate this choice of perturbative Hamiltonian by noticing that the low-energy behaviour of the Hamiltonian (\ref{e:fullHami}) will involve projectors to maximally entangled states of virtual qudits acting within the \physicalspace (or the unperturbed ground space), much as in the terms of the canonical parent Hamiltonian (\ref{e:injectHam}). This will allow us to argue that the low-energy effective Hamiltonian of our model is a valid parent Hamiltonian for a given PEPS, albeit in an encoded form.	We will revisit this intuitive picture before proceeding to the general analysis, but first let us define our perturbation formalism.
	
	\subsection{Global Schrieffer-Wolff perturbation expansion}\label{s:gsw}
	
		\newcommand\heff[1]{H_{\mathrm{eff}}^{\expect{#1}}}	
	
	We are interested in the low energy effective Hamiltonian of our system. To derive this effective Hamiltonian, we will make use of the global Schrieffer-Wolff (SW) perturbation expansion~\cite{Schrieffer1966,Bravyi2011}. We give a brief review of some relevant properties of the global SW method here, following Ref.~\cite{Bravyi2011}. We will focus on the relevant case where the unperturbed Hamiltonian is 1-local and the perturbation is 2-local on the lattice $\graph$, which has bounded degree. 
	
	The effective Hamiltonian derived from the global SW method is based on a transformation $\e^{S}H\e^{-S}$ that block diagonalizes $H$ with respect to the ground and excited spaces of the unperturbed Hamiltonian $H_0$. We define an anti-Hermitian operator $S$ such that
	\begin{align}\label{e:globalswrot}
		P_0\e^{S}H\e^{-S}Q_0 = Q_0\e^{S}H\e^{-S}P_0 &= 0
	\end{align}
	That is, the transformed Hamiltonian $\e^{S}H\e^{-S}$ has vanishing block-off-diagonal components.
	
	Together with (\ref{e:globalswrot}), the conditions $P_0 S P_0=Q_0 S Q_0=0$ and $\|S\|<\frac{\pi}{2}$ uniquely define $S$. We will expand $S$ in a Taylor series in $\pert$ and use this to compute an effective Hamiltonian expansion, but before we proceed we will introduce some notations. Define 
	\begin{align}
		\mathcal{L}(X) &= \frac{Q_0}{H_0}XP_0-P_0X\frac{Q_0}{H_0}\\
		X_{\mathrm{d}} &= P_0XP_0+Q_0XQ_0\\
		X_{\mathrm{od}} &=P_0XQ_0+Q_0XP_0
	\end{align}
	where we define $\frac{Q_0}{H_0}$ in the obvious way to vanish on the image of $P_0$.  
	
	Without loss of generality, we set $H_0P_0=P_0H_0=0$, i.e., the unperturbed ground state energy is set to zero.  Because of this zero eigenvalue, we note that we can express $\frac{Q_0}{H_0}$ as $Q_0\tilde{g}$ with
	\begin{align}
		\tilde{g}=\oconst P_0+\frac{Q_0}{H_0}
	\end{align}
	for an \emph{arbitrary} constant $\oconst$. We will make extensive use of this identity, and the freedom to set $\oconst$, to prove our result.
		
	 Equipped with these notations, and following Ref.~\cite{Bravyi2011}, we expand $S=\sum_{j=1}^\infty \pert^j S_j$ as a series of anti-Hermitian operators $S_j=-S_j^{\dagger}$, finding
	\begin{align}
		S_1&= \mathcal{L}(V)\label{e:sfirst}\\
		S_j&= \mathcal{L}\left([S_{j-1},V_{\mathrm{d}}]\right) + \sum_{i= 1}^{\left\lfloor\frac{j-1}{2}\right\rfloor} a_{2i}\mathcal{L}\left(W^{(j-1)}_{2i}\right) \qquad \mathrm{for}\;j>1
	\end{align}
	for $a_{i}=\frac{2^iB_i}{i!}$ with $B_i$ the Bernoulli numbers, and
	\begin{align}
		W^{(k)}_m&= \sum_{\substack{j_1,\ldots, j_{m}\geq 1\\j_1+\ldots+j_{m}=k}}[S_{j_1},[S_{j_2},[\cdots,[S_{j_{m}},V_\mathrm{od}]\cdots]]]
	\end{align}
	
	This yields an effective Hamiltonian to order $n$ of the form
	\begin{align}\label{e:globalSWdef}
		H_{\mathrm{eff}}^{\expect{n}}&= P_0 H_0 P_0 + \sum_{j=1}^n \pert^j P_0 V^{(j-1)} P_0
	\end{align}
	where
	\begin{align}
		V^{(0)} &= V \\
		V^{(j-1)} &= \sum_{i=1}^{\left\lfloor\frac{j}{2}\right\rfloor} b_{2i-1} W^{(j-1)}_{2i-1}\;,\qquad j> 1\label{e:vlast}
	\end{align}
	with $b_{2i-1}=\frac{2(2^{2i}-1)B_{2i}}{(2i)!}$.
	
	The terms in \Eref{e:globalSWdef} can be systematically calculated through a diagrammatic technique~\cite{Bravyi2011}, but for our purposes, we will not need to calculate the exact expansion of the effective Hamiltonian for a general PEPS to arbitrary order.  It will suffice for us to note that each term $P_0V^{(j)}P_0$ in (\ref{e:globalSWdef}) can be written as a linear combination of operators of the form
	\begin{align}\label{e:gammadefs}
		\Gamma({q_1,\ldots,q_{j}})&= P_0\left(\prod_{i=1}^{j}V g_{q_i}\right)V P_0
	\end{align}
	for integers $q_i$ with $\sum_{i} |q_i|=j$, and where $g_q$ is defined by
	\begin{align}
		g_0	&= P_0\\
		g_q &= \tilde{g}^q=\oconst^q P_0+\frac{Q_0}{H_0^q}&&\mathrm{for}\;q\geq 1\\
		g_q &= P_0\tilde{g}^{|q|}= \oconst^{|q|} P_0&&\mathrm{for}\;q\leq -1 \,.
	\end{align}
	 This can be seen by the application of Eqs.~(\ref{e:sfirst}-\ref{e:vlast}) and making use of the identities $\frac{Q_0}{H_0}=Q_0\tilde{g}$, $Q_0=1-P_0$, and $H_0P_0=P_0H_0=0$.	
		
	We will also make use of the fact that the effective Hamiltonian \Eref{e:globalSWdef} obeys the linked cluster theorem, which loosely states that all terms in the perturbative expansion at order $n$ are $O(n)$-local.
	
	\subsection{Ground space of the effective Hamiltonian}\label{s:gsgsw}
	
	In this section, we prove Theorem \ref{t:nthorder}. That is, we will show that to some finite order $n_*$, the effective Hamiltonian expansion of Eq.~(\ref{e:fullHami}) is a gapped parent Hamiltonian for the desired quasi-injective, isometric PEPS with a gapped canonical parent Hamiltonian. Before we begin the proof in earnest, let us briefly attempt to give some intuition for our construction.
	
	The effective Hamiltonian (\ref{e:globalSWdef}) can be written as a linear combination of $\Gamma({q_1,\ldots,q_{j}})$ operators as defined in Eq.~(\ref{e:gammadefs}). Imagine for the moment that we were able to neglect the $g_q$ terms in these operators (i.e.~neglect the dependence on the spectrum of $H_0$). These operators would then reduce to $P_0 V^{j+1} P_0$, and for isometric, quasi-injective PEPS the effective Hamiltonian would take the form
	\begin{align}
		H_{\mathrm{eff}}^{\expect{n}} &\sim -\sum_{j=0}^n \pert^n \pr_0\left(\sum_e M_e\right)^{j}\pr_0\qquad\qquad\mbox{(neglecting constants and }g_q\mbox{ factors)}\\
			&\sim-\sum_R \pmapbrac{\graph}^{\dagger} \canproj_R\pmap_\graph\label{e:effHamsimple}
	\end{align}
	where $R$ runs over all regions containing at most $n$ edges. 

	Equation (\ref{e:effHamsimple}) is precisely the encoded parent Hamiltonian (\ref{e:canparentHamv}), and so for $n\sim O(r_*)$ will have the desired PEPS as its ground state. (Recall from \Sref{s:peps} that $r_*$ is the maximum required region size to guarantee the canonical parent Hamiltonian has the correct ground space.) The following sections will be devoted to giving this simple intuition a level of rigor.	

	\subsubsection{Analysis}\label{s:perturbanal}	
		
	In order to analyse the ground space of the effective Hamiltonian (\ref{e:globalSWdef}), it will be useful to split it into two parts: $H_{\mathrm{eff}}^{\expect{n}}=\tilde{H}_{\mathrm{eff}}^{\expect{n}}+\tilde{H}_{\mathrm{else}}$, where  $\tilde{H}_{\mathrm{eff}}^{\expect{n}}$ contains all $\Gamma({q_1,\ldots,q_{j}})$ terms with all $q_i$ positive, and $\tilde{H}_{\mathrm{else}}$ contains the terms with at least one $q_i \leq 0$. The motivation for this split is that $g_q$ for $q\leq 0$ are proportional to $P_0$, while those with $q>0$ are not. This distinction will prove crucial in the analysis.
	
	 The constraint $\sum_i|q_i|=j$ implies that the only $\Gamma$ terms with all positive $q_i$ have all $q_i=1$. It is straightforward to demonstrate that $\tilde{H}_{\mathrm{eff}}^{\expect{n}}$ can be written as
	\begin{align}\label{e:restrictham}
		\tilde{H}_{\mathrm{eff}}^{\expect{n}}&= P_0 H_0 P_0 + \sum_{j=1}^n (-1)^{j}\pert^j P_0 \Gamma({1,1,\ldots 1,1}) P_0 \,,
	\end{align}
	where we have made use of the fact that $b_1=\frac{1}{2}$. The behavior of Eq.~(\ref{e:restrictham}) actually depends on the value of $\oconst$, in contrast to the complete effective Hamiltonian (\ref{e:globalSWdef}) which is independent of $\oconst$. This is because we have neglected some terms which would otherwise cancel out the effect of $\oconst$ in the Hamiltonian.
	
	Our proof of Theorem \ref{t:nthorder} will proceed in two parts. In the first, we will expand the restricted Hamiltonian $\tilde{H}_{\mathrm{eff}}^{\expect{n}}$ of Eq.~(\ref{e:restrictham}) to some finite order $n_*\sim r_*$, and show that this is a valid parent Hamiltonian for a given (quasi-injective, isometric) PEPS for sufficiently large $\oconst$. This may seem suspicious at first glance, as our proof only holds for sufficiently large values of an unphysical parameter. However, taking $r_*\sim O(1)$, $\oconst$ should be understood as a placeholder for some $O(1)$ (i.e.~intensive) constant that will be important in the subsequent analysis. Although the value of $\oconst$ chosen does not affect the behavior of the effective Hamiltonian (\ref{e:globalSWdef}), the value of $\oconst$ required to demonstrate this result captures the magnitude of a relevant energy scale in the problem.
	
	In the second part of the proof, we will restore the neglected terms $\tilde{H}_\mathrm{else}$ to analyse the complete effective Hamiltonian (\ref{e:globalSWdef}) at order $n_*$. We will show that there exists sufficiently small $\pert$ that the physics of (\ref{e:globalSWdef}) is dominated by $\tilde{H}_{\mathrm{eff}}^{\expect{n}}$. That is, the additional contributions from $\tilde{H}_{\mathrm{else}}$ do not affect the ground space nor gappedness of the Hamiltonian. The required value of $\pert$ will be set in part by the value of $\oconst$. The neglected terms that we restore in this part of the analysis have some properties that will be quite useful. Recall that every neglected term has at least one $q_i<1$, and $g_q\propto P_0$ for all $q<1$. This allows us to decompose any such term into a product of $\Gamma(1,1,\ldots,1,1)$ terms. Putting our Hamiltonian into this form (i.e.~a sum of products of $\Gamma(1,1,\ldots,1,1)$ terms) will allow us to analyse it effectively.
	
	We now prove Theorem \ref{t:nthorder}, beginning with the following lemma.
			
	\begin{lemma}\label{l:oconstlemma}
		There exists $O(1)$ (i.e.~intensive) constants $\oconst$ and $n=n_*$ such that $\tilde{H}_{\mathrm{eff}}^{\expect{n}}$ is a valid parent Hamiltonian for a given quasi-injective, isometric PEPS with ground state energy $<0$.
	\end{lemma}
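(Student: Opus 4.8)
The plan is to exploit $P_0H_0P_0=0$, so that \Eref{e:restrictham} collapses to $\tilde{H}_{\mathrm{eff}}^{\expect{n}}=\sum_{j=1}^{n}(-1)^{j}\pert^{j}P_0(V\tilde{g})^{j}VP_0$, and then to expand each internal factor $\tilde{g}=\oconst P_0+\frac{Q_0}{H_0}$ so as to organise $\tilde{H}_{\mathrm{eff}}^{\expect{n}}$ as a polynomial in $\oconst$. First I would isolate the leading part, obtained by replacing every $\tilde{g}$ by $\oconst P_0$: at order $j$ this contributes $-\pert^{j}\oconst^{j}\sum_{e_1,\dots,e_{j+1}}P_\graph M_{e_1}P_\graph M_{e_2}\cdots M_{e_{j+1}}P_\graph$, the overall sign being fixed since each $V$ supplies $-\sum_e M_e$. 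The key observation is that each summand $P_\graph M_{e_1}P_\graph\cdots M_{e_{j+1}}P_\graph$ is exactly one of the operators $\concatproj_{\{R_{P_i},R_{E_i}\}}$ of \Eref{e:defupsilons}, with single-edge sets $R_{E_i}=\{e_i\}$ and the internal projectors $P_{R_{P_i}}$ taken over the whole (connected) graph; summing over the ordered tuples reinstates the $\tfrac12(\,\cdot\,+\mathrm{h.c.})$ symmetrisation. Recalling that the contraction $\concatproj_{\{\emptyset,R\}}=\pmapbrac{R}^{\dagger}\canproj_R\pmap_R$ reproduces the canonical parent Hamiltonian terms, the leading part is thus a strictly-positively-weighted combination of operators $-\concatproj_{\{R_{P_i},R_{E_i}\}}$, reproducing the encoded parent Hamiltonian of \Eref{e:effHamsimple} summed over all regions of at most $n$ edges.

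Given this identification, I would invoke quasi-injectivity directly: by \Eref{e:quasidef2} each $\concatproj_{\{R_{P_i},R_{E_i}\}}$ stabilises the PEPS at its largest eigenvalue, so the leading operator is of the form \Eref{e:SHamil} and is therefore a frustration-free parent Hamiltonian with $\ket{\psi_{\mathrm{PEPS}}}$ as a ground state. Because it contains every term of a canonical parent Hamiltonian built from regions of up to $n$ edges, choosing $n=n_*\sim r_*$ guarantees (by the defining property of $r_*$) that this ground state is exactly the desired encoded PEPS, and the remark following \Eref{e:SHamil}, that gappedness of \Eref{e:canparentHamp} is inherited, ensures it is gapped. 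The ground-state energy is strictly negative, since already the $j=1$ contribution $-\pert\oconst\sum_e P_\graph M_e P_\graph$ evaluates to $-\pert\oconst\sum_e\eta_{\{e\}}<0$ on the PEPS, the relevant eigenvalues being strictly positive.

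It then remains to control the discarded contributions, namely every term in the $\oconst$-expansion in which at least one internal $\tilde{g}$ is replaced by $\frac{Q_0}{H_0}$. The plan is to bound these in operator norm: since $\bigl\|\tfrac{Q_0}{H_0}\bigr\|\le\Delta_0^{-1}=1$, $\|V\|$ is controlled, and the linked-cluster property renders each order-$j\le n_*$ term $O(n_*)$-local on the bounded-degree graph $\graph$, both the number of such terms and their norms are bounded by intensive constants; moreover each carries at most $\oconst^{j-1}$ rather than $\oconst^{j}$. Hence, order by order, this remainder is smaller than the leading part by a factor $O(\oconst^{-1})$, uniformly in the system size. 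Treating it as a quasi-local perturbation of the gapped parent Hamiltonian of the previous paragraph, I would fix $\oconst$ to be a sufficiently large but still intensive ($O(1)$) constant so that this perturbation is small relative to the gap; the spectral gap and the sign of the ground-state energy then survive, and the ground state remains (an encoding of) the PEPS. Taking $n_*\sim r_*$ completes the argument.

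I expect the main obstacle to be precisely this last step: ensuring that a single $O(1)$ value of $\oconst$ suffices uniformly in the system size. The delicate point is that the leading part itself spans all powers $\oconst^{1},\dots,\oconst^{n_*}$, and its gap is governed by the canonical parent Hamiltonian, hence by the lower-order, smaller-coefficient terms; so the comparison between the remainder and this gap must be made using the $O(1)$ gap of the canonical parent Hamiltonian together with the locality (linked-cluster) bounds on the order-$n_*$ terms, rather than a naive global norm estimate. By contrast, the combinatorial identification of the products $P_\graph M_{e_1}P_\graph\cdots M_{e_{j+1}}P_\graph$ with the $\concatproj$ operators, including the bookkeeping of signs and multiplicities, is routine.
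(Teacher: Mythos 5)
Your decomposition keeps as the ``leading part'' only the contributions in which every $\tilde{g}$ is replaced by $\oconst P_0$, and this discards precisely the terms the lemma needs. (Incidentally, your expansion has an off-by-one: at order $\pert^j$ the term $\Gamma(1,\ldots,1)$ contains $j$ factors of $V$ and $j-1$ factors of $\tilde{g}$, not $j+1$ and $j$.) With all internal insertions equal to $\oconst P_0$, the leading part consists solely of operators $P_0M_{e_1}P_0M_{e_2}P_0\cdots M_{e_j}P_0$, i.e.~$\concatproj$ operators whose edge sets $R_{E_i}$ are all single edges. The canonical parent Hamiltonian terms $\concatproj_{\{\emptyset,R\}}=\pmapbrac{R}^{\dagger}\canproj_R\pmap_{R}$ require the $|R|\sim r_*$ projectors $M_e$ to be multiplied \emph{without} intervening ground-space projections, and in the expansion of $\Gamma(1,\ldots,1)$ these arise only from the $\frac{Q_0}{H_0}$ components of $\tilde{g}$ --- concretely, from the no-projection piece proportional to $P_{\graph\setminus R}$ alone in the paper's expansion \Eref{e:h0expandexplicit}. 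So your claim that the leading part ``contains every term of a canonical parent Hamiltonian built from regions of up to $n$ edges'' is false: by quasi-injectivity its ground space \emph{contains} the PEPS, but nothing forces it to equal the PEPS space, and the gap cannot be inherited from \Eref{e:canparentHamp}. The remainder you propose to bound away is where the actual parent Hamiltonian lives.

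The proposed perturbative control of that remainder also fails parametrically. The gap of any such leading Hamiltonian is set by its highest-order coefficients, i.e.~is $O(\pert^{n_*})$ (cf.~Theorem~\ref{t:nthorder}), whereas remainder terms carrying a single factor of $\frac{Q_0}{H_0}$ appear already at order $\pert^2$ with local norm $O(\pert^2)$; since $\pert$ is taken small after $\oconst$ is fixed, $\pert^2$ dwarfs $\pert^{n_*}$ for $n_*>2$, so the remainder is never small compared with the gap --- your $O(\oconst^{-1})$ suppression compares it only to the same-order leading term, not to the gap, and your closing paragraph's appeal to the $O(1)$ gap of the canonical parent Hamiltonian is unavailable because those terms are not in your leading part. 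Moreover, a norm-perturbation argument could at best preserve a gap; it cannot deliver frustration-freeness of $\tilde{H}_{\mathrm{eff}}^{\expect{n}}$, which the paper relies on later when invoking the topological stability theorem. The paper's proof avoids any smallness estimate inside this lemma: restricting $\tilde{g}$ to the local region $R$ on which a term acts, expanding in the equally spaced spectrum of $H_0$, and evaluating the binomial sums yields $\tilde{g}(R)=\bigl(\oconst-\hcurl_{|R|}/\Delta_0\bigr)P_0+\sum_{j=1}^{|R|}\bigl(j\binom{|R|}{j}\bigr)^{-1}\sum_{R'\subseteq R:|R'|=|R|-j}P_{R'}P_{\graph\setminus R}$, all of whose coefficients are positive once $\oconst>\hcurl_{|R|}/\Delta_0$ (a harmonic number, hence $O(1)$ for $|R|\leq O(n_*)$). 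Every term of \Eref{e:restrictham} --- including all $\frac{Q_0}{H_0}$ contributions --- is then \emph{exactly} a positively weighted $-P_0\concatproj P_0$, so quasi-injectivity gives frustration-freeness and the correct ground space, and the canonical terms, now genuinely present, supply the gap. That exact positivity resummation of $\tilde{g}$, rather than a leading-plus-small-remainder split, is the idea your proposal is missing.
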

	\begin{proof}	
	Consider one of the terms in the restricted effective Hamiltonian of Eq.~(\ref{e:restrictham}):
	\begin{align}
		(-1)^{j}\Gamma({1,1,\ldots 1,1})&=(-1)^{j}P_0\left(\prod_{i=1}^{j-1}V \tilde{g}\right)V P_0\\
		&=-P_0\left(\prod_{i=1}^{j-1}\left(\sum_e M_e\right)\tilde{g}\right)\left(\sum_e M_e\right) P_0		
	\end{align}
	Similarly to Eq.~(\ref{e:effHamsimple}), if we could ignore the $\tilde{g}$ terms, this would become a sum of $\pmapbrac{\graph}^{\dagger} \canproj_R\pmap_\graph$ operators. Thus we begin our analysis by expanding the operator $\tilde{g}$ as follows:
	\begin{align}\label{e:dissecth0}
		\tilde{g}&=\oconst P_0+\frac{Q_0}{H_0}= \oconst{P}_0(\graph)+\frac{1}{\Delta_0}{P}_1(\graph)+\frac{1}{2\Delta_0}{P}_2(\graph)+\ldots+\frac{1}{N\Delta_0}{P}_{N}(\graph)
	\end{align}
	for ${P}_i(\graph)$ the projector to the $i\th$ excited space of $H_0$. This follows from the equally spaced spectrum of the unperturbed Hamiltonian $H_0$. In fact, we will be interested only in the effect of $\tilde{g}$ on local regions $R$, and so we will generally need only consider excited states up to energy $|R|\Delta_0$. Define the restricted operator
	\begin{align}\label{e:dissecth0restrict}
		\tilde{g}(R)&=\oconst{P}_0(R)+\frac{1}{\Delta_0}{P}_1(R)+\frac{1}{2\Delta_0}{P}_2(R)+\ldots+\frac{1}{|R|\Delta_0}{P}_{|R|}(R)
	\end{align} 	
	where the ${P}_i(R)$ involve only states with excitations localized within $R$.	
	
	Since $H_0$ is a sum of commuting projectors $Q_s$, we can easily enumerate all possible states with $i$ excitations localized within a region $R$. The corresponding projectors ${P}_i(R)$ can each be expanded as 
	\begin{align}
		{P}_i(R)&=P_{\graph\setminus R}\sum_{R'\subseteq R:|R'|=i}\left(\prod_{s'\in R'}Q_{s'}\right)\left(\prod_{s\in R\setminus R'}P_{s}\right)\label{e:38}\\
		&=P_{\graph\setminus R}\sum_{k=0}^i\left(\sum_{R''\subseteq R:|R''|=(i-k)}(-1)^k\begin{pmatrix}|R|-(i-k)\\k\end{pmatrix}\left(\prod_{s\in R\setminus R''}P_{s}\right)\right)
	\end{align}
	where $R_1\setminus R_2$ contains the sites of $R_1$ that are not in $R_2$, and we have expanded the $Q_s=1-P_s$ on the second line. Noting that $\sum_{R''\subseteq R:|R''|=j}\prod_{s\in R\setminus R''}P_{s}=\sum_{R'\subseteq R:|R'|=|R|-j}P_{R'}$, we find
	\begin{align}
		\tilde{g}(R)&= \left(\oconst+\sum_{i=1}^{|R|}(-1)^{i}\frac{1}{i\Delta_0}\begin{pmatrix}|R|\\i\end{pmatrix}\right)\cdot P_0
		\nonumber\\
		&\qquad +\sum_{j=1}^{|R|}\left(\sum_{k=0}^{|R|-j}(-1)^k\frac{1}{(j+k)\Delta_0}\begin{pmatrix}|R|-j\\k\end{pmatrix}\right)\cdot\sum_{R'\subseteq R:|R'|=|R|-j}P_{R'}P_{\graph\setminus R}\label{e:40}
	\end{align}
	
	The sums over $k$ can be evaluated explicitly, and yields the result
	\begin{align}\label{e:h0expandexplicit}
		\tilde{g}(R)&= \left(\oconst-\frac{1}{\Delta_0}\hcurl_{|R|}\right) P_0+\sum_{j=1}^{|R|}\left(j\begin{pmatrix}|R|\\j\end{pmatrix}\right)^{-1}\cdot\left(\sum_{R'\subseteq R:|R'|=|R|-j}P_{R'}P_{\graph\setminus R}\right)
	\end{align}
	for $\hcurl_{|R|}=\sum_{j=1}^{|R|}\frac{1}{j}$. We can guarantee that the first term in this expression is positive by choosing $\oconst>\frac{1}{\Delta_0}\hcurl_{|R|}$, while it is clear that $\left(j\begin{pmatrix}|R|\\j\end{pmatrix}\right)^{-1}$ is positive for all $j>0$. It is clear that as long as $|R|$ is finite, we can also choose $\oconst$ finite.
	
	Now returning to $\Gamma({1,1,\ldots 1,1})$, we notice that
	\begin{align}
		(-1)^{j}\Gamma({1,1,\ldots 1,1})&=-P_0\left(\prod_{i=1}^{j-1}\left(\sum_e M_e\right) \tilde{g}\right)\left(\sum_e M_e\right) P_0	\\
		&=-\sum_{\substack{\{R_{E_i}\}\\\sum_i|R_{E_i}|=j}}\sum_{\{R_{P_i}\}} c_{\{R_{E_i},R_{P_i}\}} P_0\concatproj_{\{R_{E_i},R_{P_i}\}}P_0
	\end{align}	
	for some constants $c$, recalling the definitions of $\concatproj_{\{R_{E_i},R_{P_i}\}}$ presented in (\ref{e:defupsilons}) as part of our definition of quasi-injectivity, and using $\tilde{g}(R)$ for $\tilde{g}$ as appropriate. We have also used the fact that the global SW effective Hamiltonian obeys the linked cluster theorem so that all terms in $\tilde{H}_{\mathrm{eff}}^{\expect{n}}$ act within regions of size $O(n)$. Further, we can see by comparison with Eq.~(\ref{e:h0expandexplicit}) that for sufficiently large $\oconst$, the constants $c$ will all be positive.
	
	Thus, the restricted effective Hamiltonian at $n$th order can be expressed as
	\begin{align}\label{e:restrictHamfinish}
	\tilde{H}_{\mathrm{eff}}^{\expect{n}}&= -\sum_{j=0}^{n}\pert^j \sum_{\substack{\{R_{E_i}\}\\\sum_i| R_{E_i}|=j}}\sum_{\{R_{P_i}\}} c_{\{R_{E_i},R_{P_i}\}} P_0\concatproj_{\{R_{E_i},R_{P_i}\}}P_0
	\end{align}
	
	This Hamiltonian takes the form of \Eref{e:SHamil}, and so as demonstrated in \Sref{s:quasiinj} it will be a valid parent Hamiltonian for a quasi-injective isometric PEPS for sufficiently large $n$. We can always find some finite order $n_*\sim O(r_*)$ that will contain all regions in the canonical parent Hamiltonian (\ref{e:canparentHamp}).
	
	Additionally, because all the terms in (\ref{e:restrictHamfinish}) are negative semi-definite, the ground space energy of $\tilde{H}_{\mathrm{eff}}^{\expect{n}}$ cannot be higher than that of $\tilde{H}_{\mathrm{eff}}^{\expect{n-1}}$, and so cannot be higher than $0$.\end{proof}

	The proof of Lemma \ref{l:oconstlemma} is the main part of our analysis that depends explicitly on the spectrum of $H_0$. If one were interested in analysing a modified construction with an alternative unperturbed Hamiltonian (\ref{e:unpertHami}) that is 2-body on the virtual qudits as well as the \physicalqudits (as discussed in \Sref{s:reducecomplex}), then the analogue of Lemma \ref{l:oconstlemma} may need alternative proof techniques.
	
	We have thus far considered the restricted Hamiltonian (\ref{e:restrictham}). Now we will restore terms from $\tilde{H}_{\mathrm{else}}$ to analyse the complete effective Hamiltonian (\ref{e:globalSWdef}). We will demonstrate that there exists sufficiently small but non-zero $\pert$ such that the ground spaces of these two Hamiltonians coincide.
	
	The terms neglected in $\tilde{H}_{\mathrm{eff}}^{\expect{n}}$ are linear combinations of the terms $\Gamma(q_1,\ldots,q_j)$ where there exists some $i$ such that $q_i<1$. Since $g_{q}\propto\left(g_q\right)^2$ for $q<1$, these $\Gamma(q_1,\ldots,q_j)$ can be rewritten as $\Gamma(q_1,\ldots,q_{i-1})\Gamma(q_{i+1},\ldots,q_{j})$ up to constant factors. By making use of the decomposition (\ref{e:h0expandexplicit}) we can expand any $\Gamma(q_1,\ldots,q_j)$ into linear combinations of $\concatproj$ terms. Decomposing each $\Gamma$ in this way, we can rearrange the effective Hamiltonian into sums of terms acting on each region of the lattice. The linked-cluster theorem (together with the guarantee that the degree of $\graph$ is bounded) guarantees that for finite $n$ only a finite number of terms act on each region. Thus there exist finite constants $c_i>0$ and $\tilde{c}_i\in \R$ such that 
	\begin{align}\label{e:expandedlocsweffham}
		\heff{n}&= -\sum_{j=0}^{n}\sum_{k=0}^{n-j} \sum_{\substack{\{R_{E_i}\}\\\sum_i| R_{E_i}|=j}}\sum_{\{R_{P_i}\}} \sum_{\substack{\{R_{E_i'}\}\\\sum_i| R_{E_i'}|=k}}\sum_{\{R_{P_i'}\}}\left(\pert^j c_{\{R_{E_i},R_{P_i}\}} P_0\concatproj_{\{R_{E_i},R_{P_i}\}}P_0\right. \nonumber \\
		&\qquad\qquad\left. +\pert^k c_{\{R_{E_i'},R_{P_i'}\}} P_0\concatproj_{\{R_{E_i'},R_{P_i'}\}}P_0  +\pert^{j+k} \tilde{c}_{\{R_{E_i},R_{P_i}\},\{R_{E_i'},R_{P_i'}\}} P_0\concatproj_{\{R_{E_i},R_{P_i}\}}\concatproj_{\{R_{E_i'},R_{P_i'}\}}P_0\right)
	\end{align}
	where $\tilde{c}_{\{R_{E_i},R_{P_i}\},\{R_{E_i'},R_{P_i'}\}}=0$ if $|R_{E_i}|=0$ or $|R_{E_i'}|=0$. If we could also guarantee that all $\tilde{c}_i>0$, then the proof that $\heff{n}$ is a parent Hamiltonian would be immediate. Unfortunately this will not be the case in general, but noting that $\pert^j, \pert^k \geq \frac{1}{\pert}\pert^{j+k}$ for terms with non-vanishing $\tilde{c}$, we use following simple lemma (given without proof) to show that the ground spaces of $\heff{n}$ and $\tilde{H}_{\mathrm{eff}}^{\expect{n}}$ coincide for sufficiently small $\pert$.
		
	\begin{lemma}\label{l:gsgen}
		Consider Hermitian operators $A, B$ with eigenvalues $\eta_1(A) \geq \eta_2(A) \geq \ldots \geq \eta_n(A)$ and $\eta_1(B) \geq \eta_2(B) \geq \ldots \geq \eta_n(B)$. If $A$ and $B$ share a common eigenspace $\mathcal{H}$ with eigenvalues $\eta_1(A)$ and $\eta_1(B)$ respectively, then $\mathcal{H}$ is also an eigenspace of
		\begin{align}
		C=A-\lambda B
		\end{align} 
		corresponding to the largest eigenvalue of $C$, for $\lambda<\lambda_c = \frac{\Delta_A}{2||B||}$, with $\Delta_A = \eta_1(A)-\eta_2(A)$. Furthermore, $C$ has a finite gap $\Delta_C>2||B||(\lambda_c-\lambda)$ between largest and second largest eigenvectors.
	\end{lemma}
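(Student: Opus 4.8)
The plan is to reduce the statement to an elementary block-diagonal argument. First I would observe that, because $A$ and $B$ are Hermitian and share $\mathcal{H}$ as a common eigenspace, each of them leaves both $\mathcal{H}$ and its orthogonal complement $\mathcal{H}^\perp$ invariant (a Hermitian operator preserving a subspace also preserves its complement, since for $w\in\mathcal{H}$ and $v\in\mathcal{H}^\perp$ one has $\braket{w}{Av}=\braket{Aw}{v}=0$). Hence $C=A-\lambda B$ is block diagonal with respect to $\mathcal{H}\oplus\mathcal{H}^\perp$, and on $\mathcal{H}$ both $A$ and $B$ act as scalars, so $C$ acts there as the single scalar
\begin{align}
\mu=\eta_1(A)-\lambda\eta_1(B).
\end{align}
Thus $\mathcal{H}$ is automatically an eigenspace of $C$, and the whole content of the lemma is to show that $\mu$ is strictly the largest eigenvalue of $C$ and to lower-bound the gap beneath it.

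Next I would bound the spectrum of $C$ on the complementary block $\mathcal{H}^\perp$ using the Rayleigh-quotient characterization of the top eigenvalue. For any unit vector $\ket{\phi}\in\mathcal{H}^\perp$ one has $\bra{\phi}A\ket{\phi}\leq\eta_2(A)$ and, in the relevant regime $0\le\lambda<\lambda_c$, the bound $-\lambda\bra{\phi}B\ket{\phi}\leq\lambda\|B\|$ (since the spectrum of $B$ lies in $[-\|B\|,\|B\|]$), so that
\begin{align}
\bra{\phi}C\ket{\phi}\leq\eta_2(A)+\lambda\|B\|.
\end{align}
By the variational principle the largest eigenvalue of $C$ restricted to $\mathcal{H}^\perp$ is therefore at most $\eta_2(A)+\lambda\|B\|$.

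Finally I would compare this bound with $\mu$. Using $\eta_1(B)\leq\|B\|$ (the top eigenvalue of a Hermitian operator never exceeds its norm),
\begin{align}
\mu-\big(\eta_2(A)+\lambda\|B\|\big)=\Delta_A-\lambda\big(\eta_1(B)+\|B\|\big)\geq\Delta_A-2\lambda\|B\|=2\|B\|(\lambda_c-\lambda),
\end{align}
which is strictly positive precisely for $\lambda<\lambda_c$. Hence $\mu$ exceeds every eigenvalue of $C$ supported on $\mathcal{H}^\perp$, so $\mathcal{H}$ is exactly the top eigenspace of $C$ and the gap satisfies $\Delta_C\geq2\|B\|(\lambda_c-\lambda)$; strictness of this last bound holds whenever $\eta_1(B)<\|B\|$, and in every case the gap is positive, which is all the application needs.

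The hard part will not be any calculation, since the argument is entirely elementary, but rather fixing the correct reading of the spectral input on $\mathcal{H}^\perp$. The step $\bra{\phi}A\ket{\phi}\le\eta_2(A)$ presupposes that $\mathcal{H}$ is the \emph{full} top eigenspace of $A$ and that $\eta_2(A)$ is the largest eigenvalue lying strictly below $\eta_1(A)$, so that $\Delta_A$ is the genuine spectral gap of $A$ above $\mathcal{H}$ and is strictly positive. This is the interpretation under which the lemma is meaningful, and it is the one hypothesis I would make explicit at the outset before running the two-line estimate above.
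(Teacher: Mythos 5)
Your proof is correct, but there is nothing in the paper to compare it against: the paper invokes Lemma~\ref{l:gsgen} explicitly ``given without proof,'' so your block-diagonal argument supplies a missing proof rather than paralleling an existing one. The argument is the natural one and is sound: both $A$ and $B$ preserve $\mathcal{H}$ and hence $\mathcal{H}^\perp$, $C$ acts on $\mathcal{H}$ as the scalar $\mu=\eta_1(A)-\lambda\eta_1(B)$, the Rayleigh bound gives $\bra{\phi}C\ket{\phi}\leq\eta_2(A)+\lambda\|B\|$ on $\mathcal{H}^\perp$, and the comparison yields a gap of at least $2\|B\|(\lambda_c-\lambda)$, strictly positive for $\lambda<\lambda_c$. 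Your closing caveat is exactly the right reading: with eigenvalues listed with multiplicity, $\eta_2(A)=\eta_1(A)$ whenever $\dim\mathcal{H}\geq 2$ and the lemma is vacuous, so $\eta_2(A)$ must mean the largest eigenvalue strictly below $\eta_1(A)$, with $\mathcal{H}$ the full top eigenspace of $A$ and $\Delta_A$ the genuine spectral gap; this matches the application, where $\Delta_A$ is the gap of the frustration-free sum of $\concatproj$ terms.

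Two further remarks. First, your weakening of the stated strict bound $\Delta_C>2\|B\|(\lambda_c-\lambda)$ to ``$\geq$'' is not pedantry: equality can occur, e.g., for $A=\mathrm{diag}(1,0)$ and $B=\mathrm{diag}(1,-1)$ one finds $C=\mathrm{diag}(1-\lambda,\lambda)$ and $\Delta_C=1-2\lambda=2\|B\|(\lambda_c-\lambda)$ exactly, so the inequality as printed in the lemma is slightly too strong; positivity of the gap is all the surrounding argument uses. Second, your derivation only invokes $\eta_1(B)\leq\|B\|$, i.e., only that the $B$-eigenvalue carried by $\mathcal{H}$ is at most $\|B\|$ in magnitude, so the hypothesis that $\mathcal{H}$ is a \emph{top} eigenspace of $B$ can be dropped entirely. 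This strengthening is in fact what the application needs: in Eq.~(\ref{e:expandedlocsweffham}) the cross terms $P_0\concatproj_{\{R_{E_i},R_{P_i}\}}\concatproj_{\{R_{E_i'},R_{P_i'}\}}P_0$ carry coefficients $\tilde{c}$ of arbitrary sign, and the encoded PEPS is an eigenvector of these operators but not necessarily one with maximal eigenvalue. (You also implicitly assume $\lambda\geq 0$, which is consistent with the intended use.)
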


	Since $\pert^j, \pert^k \geq \frac{1}{\pert}\pert^{j+k}$, we can immediately apply Lemma \ref{l:gsgen} to conclude that there exists $\pert>0$ such that each bracketed term has the same ground space as if all $\tilde{c}$ were zero (since we are guaranteed that an isometric, quasi-injective PEPS state corresponds to the highest eigenvalue of any $\concatproj$ operator). Immediately we can conclude that $\heff{n}$ has the same ground space as $\tilde{H}_{\mathrm{eff}}^{\expect{n}}$. If we consider each bracketed term as a (local) operator, this Hamiltonian is also frustration-free. Since the $c$ and $\tilde{c}$ were implicitly functions of $\oconst$, the critical value of $\pert$ will similarly be set in part by the value of $\oconst$.
	
	Finally, to complete the proof of Theorem \ref{t:nthorder}, we note that because all of the $c, \tilde{c}$ coefficients in Eq.~(\ref{e:expandedlocsweffham}) are $O(1)$, the gap of the effective Hamiltonian for our system must be at least $O(\pert^{n})$. Additionally, we can similarly see that $\pert$ can be chosen small enough that $\heff{n}$ has ground space energy no greater than $\heff{0}$.  Because the energy of any state in the image of $Q_0$ is $0$, and the ground state energy of $\heff{0}$ is $0$, at least one ground state of $\heff{n}$ must be in the support of $P_0$. This means that it is sensible to discuss the restriction of Eq.~(\ref{e:expandedlocsweffham}) to $P_0$. In particular, this is useful because (\ref{e:expandedlocsweffham}) in its general form is not a sum of local terms (note that $P_0$ is a highly non-local operator, having support on the entire lattice), while it does have this feature after being restricted to $P_0$.
	

\section{Stability of Effective Hamiltonian} \label{s:stable}

	\def\l{\mathrm{loc}}
	\def\g{\mathrm{glob}}
	\newcommand\heffg[1]{H_{\mathrm{eff},\g}^{\expect{#1}}}	
	\newcommand\heffgR[1]{H_{\mathrm{eff},\g,R}^{\expect{#1}}}	
	\newcommand\heffgRs[1]{H_{\mathrm{eff},\g,R_s}^{\expect{#1}}}	
	\newcommand\heffl[1]{H_{\mathrm{eff},\l}^{\expect{#1}}}
	\newcommand\hefflR[1]{H_{\mathrm{eff},\l,R}^{\expect{#1}}}
	\newcommand\hefflRs[1]{H_{\mathrm{eff},\l,R_s}^{\expect{#1}}}
	\newcommand\hl[1]{H_{\l}^{\expect{#1}}}

	Having proved \Tref{t:nthorder}, we will now show that the ground space of $\heff{n}$ does not change dramatically (i.e.,~will remain in the same phase) as we include additional contributions from higher order terms in the perturbation expansion, as detailed in Theorem~\ref{t:stability}.  In this context, we will say that $\heff{n}$ is \emph{stable} to these additional terms. To prove Theorem~\ref{t:stability}, we will require that the canonical parent Hamiltonian of the PEPS under consideration obeys the Local-TQO and Local-Gap conditions, and so the PEPS is topologically ordered. This will allow us to use the results on stability of topologically ordered systems under quasi-local perturbations~\cite{Michalakis2011,Bravyi2010}.
	
	Let us outline the proof strategy, which proceeds in two stages. In order to analyse the stability of the effective Hamiltonian, it will be convenient to make use of the \emph{local} Schreiffer-Wolff perturbation method in contrast to the global SW method used in \Sref{s:perturb}. The local SW method produces an effective Hamiltonian expansion with locality properties that we will exploit to prove our stability results. With this in mind, the first stage of our proof will be to transform the global SW effective Hamiltonian derived in the previous section into the corresponding local SW effective Hamiltonian and showing that the properties of the ground space are preserved under this transformation. This is captured by the following lemma.
	\begin{lemma}\label{l:globaltolocal}
		For a quasi-injective, isometric, topologically ordered PEPS, the global SW effective Hamiltonian of our model at order $n_*$ is in the same phase as the local SW effective Hamiltonian at order $n_*$ for sufficiently small $\pert>0$.
	\end{lemma}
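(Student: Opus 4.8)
The plan is to exhibit an explicit gapped path connecting $\heffg{n_*}$ to $\heffl{n_*}$, using the operator difference between the two expansions as a quasi-local perturbation whose effect on the gap is controlled by the Topological Stability theorem (\Tref{t:topologicalstability}). Because the gap of $\heffg{n_*}$ is only $O(\pert^{n_*})$, I would first rescale: set $\tilde H_0 \equiv \pert^{-n_*}\heffg{n_*}$, which is frustration-free with an $O(1)$ gap and has the encoded PEPS as its ground space by \Tref{t:nthorder}. I would then check that $\tilde H_0$ satisfies the Local-TQO and Local-Gap conditions. This follows because, after restriction to $P_0$, $\heffg{n_*}$ is a sum of $\concatproj$ terms (with positive coefficients) that contains every encoded canonical term $\pmapbrac{R}^{\dagger}\canproj_R\pmap_\graph$ for regions up to size $n_*\geq r_*$; since the PEPS is topologically ordered and, as noted in \Sref{s:topopeps}, adjoining further regions to a topologically ordered canonical parent Hamiltonian preserves both conditions, $\tilde H_0$ is topologically ordered, and rescaling by a constant affects neither the ground-space projectors (Local-TQO) nor the $r$-dependence of the local gaps (Local-Gap).

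Next I would analyse the difference $V' \equiv \heffl{n_*}-\heffg{n_*}$. The global expansion obeys the linked cluster theorem, so its order-$n_*$ terms are $O(n_*)$-local, while the local SW construction of \cite{Bravyi2011, Datta1996} produces terms with explicit exponential locality; combining these, $V'$ admits a decomposition $V' = \sum_{s}\sum_r V'_{s,r}$ with $\|V'_{s,r}\|$ decaying exponentially in $r$, i.e.\ $V'$ is $(J,\mu)$-quasi-local. The quantitative input I need is that the two expansions agree at leading order, so that the rescaled difference $\pert^{-n_*}V'$ is quasi-local with a coefficient that can be made arbitrarily small by taking $\pert$ small; intuitively, passing from the single global rotation $\e^{S}$ to the product of local rotations discards only the exponentially decaying tails of $S$ and reorganizes boundary terms, each suppressed by an extra factor of $\pert$ relative to the retained order-$n_*$ terms.

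Finally I would interpolate. Define $\tilde H(t) = \tilde H_0 + t\,\pert^{-n_*}V'$ for $t\in[0,1]$, so that $\tilde H(0)=\tilde H_0$ and $\tilde H(1)=\pert^{-n_*}\heffl{n_*}$. Since $\pert^{-n_*}V'$ is quasi-local with norm below the stability threshold for small enough $\pert$, \Tref{t:topologicalstability} guarantees that $\tilde H(t)$ retains an $O(1)$ gap along the entire path. Undoing the rescaling, $\heffg{n_*}+tV'$ is a gapped quasi-local adiabatic path joining the two order-$n_*$ effective Hamiltonians without closing the gap, which is precisely the statement that they lie in the same phase.

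The main obstacle is the middle step: controlling $V'$ as a perturbation. Because the gap $O(\pert^{n_*})$ is already parametrically small, there is very little room, so it is not enough to know that $V'$ is quasi-local -- I must show its quasi-local norm is suppressed by a further power of $\pert$, i.e.\ that the global and local SW effective Hamiltonians agree to order $n_*$ and first differ at order $n_*+1$ (up to exponentially localized tails). Establishing this requires a careful term-by-term comparison of the two Schrieffer-Wolff series and a bound on the contributions discarded when the global generator is replaced by its local counterpart; this is the technically delicate part of the argument, and it is also where the ``slightly modified'' local SW expansion must be introduced so that the tails can be cleanly separated.
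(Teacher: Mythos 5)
Your overall strategy---cast the discrepancy between the two order-$n_*$ effective Hamiltonians as a quasi-local perturbation and invoke the topological stability theorem (\Tref{t:topologicalstability})---is the same as the paper's, but there is a genuine gap at exactly the step you flag as delicate, and it cannot be repaired in the form you propose. You need the naked difference $V'=H_{\mathrm{eff},\mathrm{loc}}^{\expect{n_*}}-H_{\mathrm{eff},\mathrm{glob}}^{\expect{n_*}}$ to have quasi-local strength suppressed beyond $\pert^{n_*}$, i.e.\ you need the two expansions to agree through order $n_*$. They do not: by Lemma 4.4 of Ref.~\cite{Bravyi2011}, used in the paper as Eq.~(\ref{eq:delta}), the two effective Hamiltonians agree only up to conjugation by a unitary $\e^{K^{\expect{n_*}}}$ whose generator $K^{\expect{n_*}}=\sum_j\pert^j K_j$ is \emph{first order} in $\pert$ ($\|K^{\expect{n_*}}\|_{\mathrm{max}}=O(\pert)$). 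Consequently $V'$ generically contains low-order contributions of the type $\pert^2[K_1,P_0VP_0]$, which are parametrically enormous compared with the $O(\pert^{n_*})$ gap. Your interpolation $H_{\mathrm{eff},\mathrm{glob}}^{\expect{n_*}}+tV'$ is therefore outside the reach of the stability theorem no matter how small $\pert$ is, and no term-by-term comparison of the two series will establish that they ``first differ at order $n_*+1$,'' because that claim is false.

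The paper repairs exactly this by splitting the move into two steps. First it considers the difference \emph{after} conjugation, $\hat{\delta}=H_{\mathrm{eff},\mathrm{glob}}^{\expect{n_*}}-\e^{K^{\expect{n_*}}}H_{\mathrm{eff},\mathrm{loc}}^{\expect{n_*}}\e^{-K^{\expect{n_*}}}$, and proves (Lemma \ref{l:deltaquasilocal}) that $\hat{\delta}$ restricted to $P_0$ admits a quasi-local decomposition with short-range pieces of norm $O(\pert^{n_*+1})$, built from the truncated nested-commutator operators $\Theta_R(k)$, and tails of norm $O(\pert^{k+1})$ at radius $r_k\sim O(kn_*)$; the interpolation Lemma \ref{l:altdecaybound} with $\lambda<\frac{1}{n_*+1}$ then yields quasi-local strength $O(\pert^{(n_*+1)(1-\lambda)})$, arbitrarily small relative to the $O(\pert^{n_*})$ gap, so the stability theorem gives a gapped path from the global effective Hamiltonian to the conjugated local one. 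Second, the residual local unitary is removed not by stability but by Lemma \ref{l:unrotate} (Ref.~\cite{Xie2010}): conjugation by $\e^{K^{\expect{n_*}}}$ with local generator preserves the phase. Any correct version of your argument needs both ingredients; the rotation cannot be absorbed into the perturbation. Two minor points: your preliminary rescaling by $\pert^{-n_*}$ is harmless but unnecessary, since the paper compares the perturbation strength directly to the $O(\pert^{n_*})$ gap; and the ``slightly modified'' local SW effective Hamiltonian you mention belongs to the proof of Lemma \ref{l:highorderstable} (handling the higher-order garbage terms), not to this lemma.
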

	
	In order to prove this lemma, we will show that the transformation between the global and local effective Hamiltonians can be achieved by the addition of a sufficiently small quasi-local operator. This allows us to use the topological stability theorem to argue that the two Hamiltonians are in the same phase. We then also give a lemma showing that if the global SW effective Hamiltonian is topologically stable, then so is the local SW effective Hamiltonian.
	
	At this point in the analysis we will simply have demonstrated that the ground space of one finite order effective Hamiltonian is in the same phase as the ground space of another finite order effective Hamiltonian. For the second stage of our proof, we will use the structure of the local SW perturbation expansion to argue that the higher-order contributions to the local SW Hamiltonian are both small and quasi-local. This will allow us to again apply the topological stability theorem and prove the following:
	\begin{lemma}\label{l:highorderstable}
		For a quasi-injective, isometric, topologically ordered PEPS, the local SW Hamiltonian of our model at order $n_*$ is in the same phase as our effective Hamiltonian at any order $k\geq n_*$, for sufficiently small $\pert>0$.
	\end{lemma}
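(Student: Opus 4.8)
The plan is to exhibit the order-$k$ local SW effective Hamiltonian as a small quasi-local perturbation of the order-$n_*$ one, and then invoke the topological stability theorem (\Tref{t:topologicalstability}). I would begin by splitting $\heffl{k}=\heffl{n_*}+\Delta H$, where $\Delta H=\sum_{j=n_*+1}^{k}\hl{j}$ collects the contributions of orders $n_*+1$ through $k$ in the (modified) local SW expansion. By Lemma \ref{l:globaltolocal} together with the accompanying result that the local SW Hamiltonian inherits topological stability from the global one, $\heffl{n_*}$ satisfies the hypotheses required of the unperturbed Hamiltonian in \Tref{t:topologicalstability}: it is frustration-free, has gap $O(\pert^{n_*})$, and obeys the Local-TQO and Local-Gap conditions inherited from the topologically ordered PEPS. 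Consequently the rescaled operator $\pert^{-n_*}\heffl{n_*}$ has $O(1)$ gap and is a legitimate base point for a stability argument.

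The crucial step, and the one I expect to be the main obstacle, is to show that $\Delta H$ is quasi-local with an overall amplitude controlled by $\pert^{n_*+1}$, uniformly in $k$. This is precisely where the local SW expansion is advantageous over the global one used in \Sref{s:perturb}: each order-$j$ piece $\hl{j}$ decomposes into terms supported on connected regions, with the term of radius $r$ around a site $s$ bounded in operator norm by $\|(\hl{j})_{s,r}\|\leq J\pert^{j}\mu^{r}$ for some fixed $\mu<1$. Granting such bounds, summing over $r$ and over $j$ from $n_*+1$ to $k$ leaves a convergent geometric series in $\pert$, so that $\pert^{-n_*}\Delta H$ admits a $(J',\mu)$-quasi-local decomposition with $J'=O(\pert)$ whose scale does not grow with $k$. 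We may therefore write $\pert^{-n_*}\Delta H=\pert V$ for a quasi-local $V$ of bounded norm, which is what allows a single threshold value of $\pert$ to serve for every order $k$ (and, in the limit, for the full expansion). Establishing the term-by-term decay bound is the technical heart of the argument, and it is also the reason the slightly modified local SW expansion flagged in \Sref{s:overview} is needed, since the bounds must hold within a genuinely quasi-local decomposition rather than merely for the total operator.

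With these ingredients I would apply \Tref{t:topologicalstability} to $\pert^{-n_*}\heffl{n_*}+t\pert V$ for each $t\in[0,1]$: since $t\pert V$ is quasi-local with norm no larger than that of $\pert V$, the theorem guarantees that for sufficiently small $\pert>0$ the spectral gap stays $O(1)$ along the entire path. Quasi-adiabatic continuation along this gapped path produces a quasi-local adiabatic evolution connecting the ground spaces, and the positive rescaling by $\pert^{-n_*}$ preserves ground spaces and leaves the gapped-path structure intact, so the same evolution connects $\heffl{n_*}$ to $\heffl{k}$. By the definition of same phase given in \Sref{s:overview}, this proves the lemma.
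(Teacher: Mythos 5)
There is a genuine gap, and it sits exactly where you flagged ``the technical heart'' and then granted it. Your plan requires the difference $\Delta H = H_{\mathrm{eff,loc}}^{\langle k\rangle}-H_{\mathrm{eff,loc}}^{\langle n_*\rangle}$ to be quasi-local with amplitude $O(\pert^{n_*+1})$ and decay parameters \emph{uniform in $k$}. The available estimates do not give this: the bound from Ref.~\cite{Bravyi2011} quoted in the paper is $\|V^{(j)}_{\mathrm{loc}}\|_{\mathrm{max}}\leq\alpha\bigl(n^{2}/(\Delta_0\beta)\bigr)^{j}$, where $n$ is the order of the SW transformation being constructed. For the order-$j$ pieces entering $H_{\mathrm{eff,loc}}^{\langle k\rangle}$ with $n_*<j\leq k-1$, the base of the geometric decay therefore grows with the order (as $k^{2}\pert/(\Delta_0\beta)$, or $(j+1)^{2}\pert/(\Delta_0\beta)$ if one minimizes over admissible $n$), so your sum over $j$ converges only for $\pert\lesssim\Delta_0\beta/k^{2}$ --- a threshold that shrinks with $k$, defeating precisely the ``single threshold value of $\pert$ to serve for every order $k$'' that the lemma (and the quantifier structure of \Tref{t:stability}) demands. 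Two secondary inaccuracies compound this: you assert $H_{\mathrm{eff,loc}}^{\langle n_*\rangle}$ is frustration-free, which the paper explicitly does \emph{not} establish (that is exactly why Lemma \ref{l:localstable} is introduced, to transfer stability from the frustration-free global SW Hamiltonian rather than apply \Tref{t:topologicalstability} directly); and your target is the bare $H_{\mathrm{eff,loc}}^{\langle k\rangle}$, whereas the statement as proved concerns the modified Hamiltonian $H_{\mathrm{eff,loc+}}^{\langle k\rangle}=H_{\mathrm{eff,loc}}^{\langle k\rangle}+P_0H_{\mathrm{garbage}}^{\langle k\rangle}P_0$ --- the caveat flagged in \Sref{s:overview}, which your split never incorporates.

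The paper's actual mechanism avoids the $k$-dependence by invoking the stability theorem only \emph{once}, at the fixed order $n_*$. The full tail $H_{\mathrm{garbage}}^{\langle n_*\rangle}=\sum_{j\geq n_*+1}\pert^{j}V^{(j-1)}_{\mathrm{loc}}$ of the order-$n_*$ transformation obeys the geometric bound with the fixed base $n_*^{2}\pert/(\Delta_0\beta)$, so it is $\bigl(O(\pert^{-1}),O(\pert)\bigr)$-quasi-local for $\pert<\Delta_0\beta/n_*^{2}$; the trade-off of Lemma \ref{l:altdecaybound} then converts this to amplitude $O(\pert^{n_*+\delta})$, $\delta>0$, beating the $O(\pert^{n_*})$ gap (this plays the role of your $\pert^{-n_*}$ rescaling). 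Having absorbed the garbage, arbitrary order $k$ is reached at \emph{no further cost in $\pert$} via the exact identity $H_{\mathrm{eff,loc+}}^{\langle k\rangle}=P_0\,e^{T^{\langle k\rangle}}e^{-T^{\langle n_*\rangle}}\bigl(H_{\mathrm{eff,loc}}^{\langle n_*\rangle}+H_{\mathrm{garbage}}^{\langle n_*\rangle}\bigr)e^{T^{\langle n_*\rangle}}e^{-T^{\langle k\rangle}}P_0$, together with Lemma \ref{l:unrotate}: conjugation by exponentials of local generators preserves the phase, so the threshold on $\pert$ is manifestly independent of $k$. To salvage your more direct route you would need $k$-uniform quasi-locality bounds on the converged operators $V^{(j)}_{\mathrm{loc}}$, which is a strengthening of the Bravyi--DiVincenzo--Loss estimates that you cannot currently cite; absent that, the unitary-reassembly step is not an optional elegance but the idea that makes the lemma true as stated.
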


	The two lemmas \ref{l:globaltolocal}-\ref{l:highorderstable} constitute a proof of \Tref{t:stability}.
	
	Throughout the following analysis, we will make use of the fact that both $n_*$ and the maximum coordination number of $\graph$ are $O(1)$ constants in $N$ and $\pert$. We will also often use locality properties of operators in this section. Although the majority of the Hamiltonians and operators we consider in this section are highly non-local (e.g.,~the unperturbed ground space projector $P_0$), we will often use the fact that these operators are local when restricted to the image of $P_0$. We will often loosely refer to an operator as local or quasi-local, when in fact it is clear from context that this is only true after the restriction to the unperturbed ground space.	

	\subsection{Transformation to local Schrieffer-Wolff effective Hamiltonian}
	
		To analyse the stability properties of the effective Hamiltonian (\ref{e:globalSWdef}), we will consider a related effective Hamiltonian derived from the local Schrieffer-Wolff method~\cite{Datta1996, Bravyi2011}.  (The previous method has been referred to as the global SW method to avoid confusion.)  As with the global SW method, the local SW method is based on a transformation that block diagonalizes the Hamiltonian with respect to the ground space and excited space of the unperturbed Hamiltonian. In contrast to the global SW transformation, the local SW transformation does not achieve this block diagonalization exactly, but only up to corrections of order $O(\pert^{n+1})$ for a given order $n$. However, it is constructed in a manifestly local way, which allows us to analyse some properties of this expansion much more directly.
	
	\subsubsection{Local Schrieffer-Wolff transformation}	
	Before proceeding, we will briefly define and review some relevant properties of the local SW method, following Ref.~\cite{Bravyi2011}. At a given order $n$ we construct a sequence of anti-Hermitian operators
	\begin{align}
	T^{\expect{n}}&=\sum_{q=1}^n \pert^q T_q
	\end{align}
	such that all $T_q$ are $(q+1)$-local and
	\begin{align}\label{e:smalloffdiag}
		\left\|P_0\e^{T^{\expect{n}}}H\e^{-T^{\expect{n}}}Q_0 + Q_0\e^{T^{\expect{n}}}H\e^{-T^{\expect{n}}}P_0\right\| &\leq O(N\pert^{n+1})
	\end{align}
	for sufficiently small $\pert$. We can decompose the transformed Hamiltonian into a block-diagonalized part and a ``garbage'' part as
	\begin{align}\label{e:rotatedfulllocalSW}
		\e^{T^{\expect{n}}}H\e^{-T^{\expect{n}}}&=\hl{n}+H_{\mathrm{garbage}}
	\end{align}
	where $Q_0\hl{n}P_0=P_0\hl{n}Q_0=0$. We use the subscript `$\l$' to denote operators derived from the local SW method where there may be confusion with similar operators from the global SW method. Because $\hl{n}$ is block-diagonal, \Eref{e:smalloffdiag} implies that $\|H_{\mathrm{garbage}}\|$ is $O(N\pert^{n+1})$.
	
	The effective Hamiltonian at order $n$ is defined as the restriction of $\hl{n}$ to the ground space of the unperturbed Hamiltonian $H_0$:
	\begin{align}
		\heffl{n} &=P_0 \hl{n} P_0
	\end{align}
	
	In order to explicitly compute $\hl{n}$, we first define a series of Hermitian operators
	\begin{align}
		V^{(j)}_{\l}&=\sum_{q=2}^{j+1}\frac{1}{q!}\sum_{\substack{1\leq j_1,\ldots,j_q\leq n \\j_1+\ldots+j_q=j+1}}[T_{j_1},[T_{j_2},[\cdots,[T_{j_q},H_0]\cdots]]]
		+\sum_{q=1}^{j}\frac{1}{q!}\sum_{\substack{1\leq j_1,\ldots,j_q\leq n \\j_1+\ldots+j_q=j}}[T_{j_1},[T_{j_2},[\cdots,[T_{j_q},V]\cdots]]]\label{e:vjdef}
	\end{align}
	where $V^{(0)}_{\l}=V$. We represent each of these operators as a sum of local terms
	\begin{align}
		V^{(j)}_{\l}=\sum_{R\subseteq \graph}V^{(j)}_{R,\l}\label{e:defVdecomp}
	\end{align}
	where each $V^{(j)}_{R,\l}$ is Hermitian and acts non-trivially only on spins within region $R$. This decomposition is chosen as the expansion of $V^{(j)}_{\l}$ in some orthogonal operator basis. Is can be shown~\cite{Bravyi2011} that $V^{(j)}_{\l}$ is $(j+2)$-local, so we can guarantee that this local decomposition need only consider $(j+2)$-local regions $R$. Each of the $T_j$ operators take the form
	\begin{align}
		T_j&= \sum_{R\subseteq \graph}\left(\frac{Q_R}{H_R}V^{(j-1)}_{R,\l}P_R-P_RV^{(j-1)}_{R,\l}\frac{Q_R}{H_R}\right)\label{e:tjdef}
	\end{align}
where $P_R\equiv\prod_{s\in R}P_s$, $Q_R\equiv 1-P_R$, and $H_R\equiv \sum_{s\in R}Q_s$. Equations (\ref{e:vjdef}) and (\ref{e:tjdef}) can be solved recursively. Given this solution, we find
	\begin{align}\label{e:definehln}
		\hl{n}&= H_0 +\sum_{j=1}^n\pert^j\sum_{R\subseteq \graph} \left( P_RV_{R,\l}^{(j-1)}P_R+Q_RV_{R,\l}^{(j-1)}Q_R \right)
	\end{align}

	\subsubsection{Properties of the local Schrieffer-Wolff transformation}

	Here we will state a number of known properties of the local Schrieffer-Wolff transformation that will be useful in our analysis~\cite{Bravyi2011}. 
	
	Although Eqns.~(\ref{e:vjdef}) and (\ref{e:tjdef}) define $V^{(j)}_{\l}$ for arbitrary (positive, integral) $j$, only those $V^{(j)}_{\l}$ with $j< n$ appear in $\hl{n}$. The remaining terms can be used to write $H_{\mathrm{garbage}}$ as
	\begin{align}\label{e:garbagedef}
		H_{\mathrm{garbage}}&= \sum_{j=n+1}^{\infty}\pert^j V^{(j-1)}_{\l}
	\end{align}
	As would be expected, $V^{(j)}_{\l}$ is independent of $n$ for $j<n$, while it is implicitly dependent on $n$ for $j\geq n+1$. For sufficiently small $\pert$, the norm of $H_{\mathrm{garbage}}$ can be bounded as $\|H_{\mathrm{garbage}}\|\leq c\Delta_0 N\pert^{n+1}$ for a constant $c$ that depends on $n$.
	
	The local SW method obeys the linked-cluster theorem, and so we can guarantee that $\heffl{n}$ is $O(n)$-local. It is also the case that at a fixed order $n$, the effective Hamiltonians found by the global and local SW methods can be related by a transformation $K^{\expect{n}}$ up to an error
	\begin{align}
	    \label{eq:delta}
			\heffg{n}-\e^{K^{\expect{n}}}\heffl{n}\e^{-K^{\expect{n}}}\equiv \hat{\delta}
	\end{align}
	with $\|\hat{\delta}\|\leq O(N|\pert|^{n+1})$ for a system with $N$ sites and $|\pert|<1$, and where $K^{\expect{n}}$ is $O(n)$-local. This is shown in Ref.~\cite[Lemma 4.4]{Bravyi2011}. We denote the global SW effective Hamiltonian by $\heffg{n}$ and the local SW effective Hamiltonian by $\heffl{n}$ to avoid confusion. In the following analysis, we will prove that $\hat{\delta}$ is quasi-local with favourable decay parameters. In doing so, we will prove Lemma \ref{l:globaltolocal}.
	
	Because $\heffg{n}$, $\heffl{n}$ and $K^{\expect{n}}$ are all $O(n)$-local, we can decompose them into operators acting non-trivially only on connected regions of $\graph$ with bounded size. Denote such a decomposition of an operator $X$ as $X=\sum_R X_R$. We now define a bound on the strength of an operator as
	\begin{align}
		\|X\|_{\mathrm{max}}&= \max_{s\in \graph}\left\|\sum_{R\ni s}X_R\right\|
	\end{align}
	
	It can be shown that $\|K^{\expect{n}}\|_{\mathrm{max}}=O(|\pert|)$~\cite{Bravyi2011}. In fact, we can expand $K^{\expect{n}}$ as a Taylor series in $\pert$ as
	\begin{align}
		K^{\expect{n}} &=\sum_{j=1}^n\pert^j K_j
	\end{align}
	for some $O(j)$-local, block-diagonal $K_j$ with $\|K_j\|_{\mathrm{max}}\sim O(1)$.
	
	\subsubsection{Transforming from global to local Schrieffer-Wolff effective Hamiltonians}	

	Given the properties of the local Schrieffer-Wolff transformation noted above, we now demonstrate that $\hat{\delta}$ of Eq.~(\ref{eq:delta}), the operator that relates the local and global SW effective Hamiltonians, is quasi-local. This quasi-locality will allow us to argue that $\heffg{n}$ is stable under addition of $\hat{\delta}$, and thus the gap does not close along a path from $\heffg{n}$ to $\heffl{n}$.
	
	\begin{lemma}\label{l:deltaquasilocal}
		$\hat{\delta}$ is $\left(O(1),O(\pert^{O(1)})\right)$-quasi-local when restricted to the space $P_0$.
	\end{lemma}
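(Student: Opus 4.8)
The plan is to combine the Hadamard (nested‑commutator) expansion of the conjugation $\e^{K^{\expect{n_*}}}\heffl{n_*}\e^{-K^{\expect{n_*}}}$ with the strict $O(n_*)$-locality of $\heffg{n_*}$. Recalling the identity $\e^{K}X\e^{-K}=\sum_{m\ge 0}\frac{1}{m!}[K,[K,\cdots,[K,X]\cdots]]$ (with $m$ copies of $K$), I would write $\hat\delta=\heffg{n_*}-\e^{K^{\expect{n_*}}}\heffl{n_*}\e^{-K^{\expect{n_*}}}$ and split it into a \emph{base} piece $\heffg{n_*}-\heffl{n_*}$, which is strictly local, and a \emph{tail} $-\sum_{m\ge 1}\frac{1}{m!}\,\mathrm{ad}_{K^{\expect{n_*}}}^m(\heffl{n_*})$ coming from the conjugation, where $\mathrm{ad}_K(X)=[K,X]$. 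All operators are understood to act on the image of $P_0$; as emphasised throughout \Sref{s:stable}, $\heffg{n_*}$, $\heffl{n_*}$ and $K^{\expect{n_*}}$ are each genuinely $O(n_*)$-local there, with $\|K^{\expect{n_*}}\|_{\mathrm{max}}=O(\pert)$ and $\|\heffl{n_*}\|_{\mathrm{max}}=O(\pert)$, even though $P_0$ itself is non-local.

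The core of the argument, and the main obstacle, is to show that the tail is quasi-local with the claimed parameters. Since $K^{\expect{n_*}}$ is $O(n_*)$-local, each commutator enlarges the support radius of a local term by at most $O(n_*)=O(1)$, so $\mathrm{ad}_{K^{\expect{n_*}}}^m(\heffl{n_*})$ is supported within radius $O(n_*)(m+1)$; consequently only terms with $m\gtrsim r/O(n_*)$ contribute at radius $r$. At the same time, a standard commutator bound for local operators on the bounded-degree graph $\graph$ gives $\|[K^{\expect{n_*}},X]\|_{\mathrm{max}}\le c\,\|K^{\expect{n_*}}\|_{\mathrm{max}}\|X\|_{\mathrm{max}}$ for a constant $c$ depending only on $n_*$ and the coordination number, so each commutator costs a factor $O(\pert)$ in $\|\cdot\|_{\mathrm{max}}$. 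The delicate point is that the number of local terms produced by $m$ nested commutators proliferates, and one must verify that this proliferation is controlled by the explicit $1/m!$ in the Hadamard series; this is exactly the Lieb--Robinson-type bookkeeping underlying the quasi-locality estimates of Ref.~\cite{Bravyi2011}. Granting this, the series over $m\gtrsim r/O(n_*)$ is dominated by its smallest term and yields a strength at radius $r$ bounded by $O(\pert)\,(c\,\pert)^{r/O(n_*)}$.

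Setting $\mu$ to a power $O(\pert^{1/O(n_*)})=O(\pert^{O(1)})$, which is $<1$ for small $\pert$, and $J=O(1)$ then controls the tail. It remains to absorb the base piece into the same bound, where I would invoke the linked-cluster theorem: a local term of $\heffg{n_*}$ or $\heffl{n_*}$ acting across a region of radius $r$ can only arise at order $\gtrsim r$ in $\pert$, since the perturbation $V=-\sum_e M_e$ is nearest-neighbour on $\graph$ and $\Omega(r)$ factors of $V$ are needed to correlate sites at distance $r$. Hence the base piece $\heffg{n_*}-\heffl{n_*}$, although strictly supported within radius $O(n_*)$, already has per-site strength $O(\pert^{\Omega(r)})$ at radius $r$, matching the decay rate of the tail. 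Choosing a common $J=O(1)$ and a common $\mu=O(\pert^{O(1)})<1$ dominating both contributions (taking the larger of the two rates) then assembles a local decomposition $\hat\delta=\sum_{s}\sum_r\hat\delta_{s,r}$ with $\|\hat\delta_{s,r}\|\le J\mu^r$ on the image of $P_0$, which is precisely the statement that $\hat\delta$ is $\left(O(1),O(\pert^{O(1)})\right)$-quasi-local. I expect the commutator-proliferation estimate of the second step to be the genuinely technical part; the remaining steps are bookkeeping given the locality and order-counting properties already established for the global and local Schrieffer--Wolff expansions.
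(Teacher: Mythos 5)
Your tail estimate essentially reproduces the paper's mechanism: the paper defines truncations $\Theta_R(k)=\sum_{j=0}^k\frac{1}{j!}[K^{\expect{n}},\cdot]^j H_{\mathrm{eff},\mathrm{loc},R}^{\expect{n}}$ and assigns the telescoping differences $\Theta_{R_s}(k)-\Theta_{R_s}(k+1)$ to radius $r_k=O(kn)$ with norm $O(\pert^{k+1})$, which is exactly your ``one factor of $\pert$ and $O(n_*)$ of radius per commutator'' bookkeeping (the proliferation caveat you flag is silently absorbed into the same asserted bound in the paper, so you are no worse off there). The genuine gap is in your base piece. You bound $H_{\mathrm{eff},\mathrm{glob}}^{\expect{n_*}}-H_{\mathrm{eff},\mathrm{loc}}^{\expect{n_*}}$ and the commutator tail \emph{separately}, but these two pieces contain mutually cancelling contributions at every order $\pert^2,\ldots,\pert^{n_*}$: the global and local SW expansions genuinely differ already at low orders, and it is precisely the conjugation by $K^{\expect{n_*}}$ that restores agreement up to order $n_*$. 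The paper captures this cancellation by invoking the identity from Ref.~\cite{Bravyi2011},
\begin{align*}
H_{\mathrm{eff},\mathrm{glob}}^{\expect{n}}=\sum_{j=0}^{n}\sum_{\substack{1\leq q_0,\ldots,q_j\leq n \\ q_0+\cdots+q_j\leq n}}\frac{1}{j!}\,\pert^{q_0+\cdots+q_j}\,[K_{q_1},[K_{q_2},\ldots[K_{q_j}, H_{\mathrm{eff},\mathrm{loc}}^{\expect{q_0}}-H_{\mathrm{eff},\mathrm{loc}}^{\expect{q_0-1}}]\ldots]]\,,
\end{align*}
so that the short-range part of $\hat{\delta}$, namely $H_{\mathrm{eff},\mathrm{glob},R}^{\expect{n}}-\Theta_R(n)$, consists solely of monomials of total $\pert$-degree $>n$ and hence has norm $O(\pert^{n+1})$. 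Your linked-cluster bound on the base, $O(\pert^{\Omega(r)})$ at radius $r$, is true but leaves terms of strength $O(\pert^2)$ at radii $\lesssim n_*$ in each of base and tail; only their difference is $O(\pert^{n_*+1})$, and bounding them separately destroys exactly this cancellation.

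Why this matters: arguably your bounds meet the letter of the $\left(O(1),O(\pert^{O(1)})\right)$ statement, but not its purpose. Immediately after the lemma, the paper uses the fact (delivered by its construction, not by the statement alone) that $\|\hat{\delta}_{s,r}\|\leq O(\pert^{n+1})$ \emph{for all} $s,r$, combined with Lemma \ref{l:altdecaybound} with $\lambda<\frac{1}{n+1}$, to make the quasi-local strength of $\hat{\delta}$ arbitrarily small compared with the $O(\pert^{n_*})$ gap of $H_{\mathrm{eff},\mathrm{glob}}^{\expect{n_*}}$; only then does the topological stability theorem apply, which is the whole content of Lemma \ref{l:globaltolocal}. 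Under your decomposition, a perturbation with $O(\pert^2)$-strength local terms dwarfs an $O(\pert^{n_*})$ gap, so stability cannot be invoked and the downstream argument collapses. To repair the proof you must organize $\hat{\delta}$ by total $\pert$-degree, as the paper does via the truncated-conjugation identity above, rather than by base-versus-tail.
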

	\begin{proof}
	Recall from our definition of quasi-locality in Sec.~\ref{s:peps}, a $(J,\mu)$-quasi-local operator has interaction strength that decays with radius $r$ as $J\mu^r$. In order to show that $\hat{\delta}$ is quasi-local, we will explicitly construct a local decomposition for it. For this purpose it will be convenient to introduce operators
	\begin{align}
		\Theta_R(k) = \sum_{j=0}^k\frac{1}{j!}[K^{\expect{n}},\cdot]^j \hefflR{n}
	\end{align}
	for $\heffl{n}=\sum_R\hefflR{n}$ an $O(n)$-local decomposition of $\heffl{n}$ with $\|\hefflR{n}\|=O(1)$, and where $[A,\cdot]^jB$ is the $j$-fold nested commutator of $A$ and $B$, e.g.~$[A,\cdot]^0B= B$,  $[A,\cdot]^1B= [A,B]$,  $[A,\cdot]^2B= [A,[A,B]]$, etc.
	
	Note that $\Theta_R(\infty)=\e^{K^{\expect{n}}}\hefflR{n}\e^{-K^{\expect{n}}}$. Because $K^{\expect{n}}$ and $\hefflR{n}$ are both $O(n)$-local (when restricted to the image of $P_0$), this leads us to consider $\Theta_R(k)$ to be a $O(kn)$-local truncation of $\e^{K^{\expect{n}}}\hefflR{n}\e^{-K^{\expect{n}}}$. To relate this new operator to $\hat{\delta}$, it is convenient to rewrite the global SW effective Hamiltonian as~\cite{Bravyi2011}
	\begin{align}
		\heffg{n}=\sum_{j=0}^{n}\sum_{\substack{1\leq q_0, q_1,\ldots,q_j\leq n \\q_0+\ldots+q_j\leq n}}\frac{1}{j!}\pert^{q_0+\ldots+q_j}[K_{q_1},[K_{q_2},\ldots[K_{q_j}, \heffl{q_0}-\heffl{q_0-1}]\ldots]]
	\end{align}
	
	Because we are interested in the local decomposition of $\hat{\delta}$, we also define
	\begin{align}
		\heffgR{n}=\sum_{j=0}^{n}\sum_{\substack{1\leq q_0, q_1,\ldots,q_j\leq n \\q_0+\ldots+q_j\leq n}}\frac{1}{j!}\pert^{q_0+\ldots+q_j}[K_{q_1},[K_{q_2},\ldots[K_{q_j}, \hefflR{q_0}-\hefflR{q_0-1}]\ldots]]
	\end{align}	
	where we note that $\heffgR{n}$ need not act only within $R$, even when restricted to the $P_0$ subspace (though it is local). It is then straightforward to see that the difference between $\heffgR{n}$ and $\Theta_R(n)$ consists only of those terms in the sum with $\sum_i{q_i}>n$, i.e.~
	\begin{align}\label{e:diffM}
		\heffgR{n} - \Theta_R(n) = -\sum_{j=0}^n\frac{1}{j!}\sum_{\substack{1\leq q_0, q_1,\ldots,q_j\leq n \\q_0+\ldots+q_j> n}}\pert^{q_0+\ldots+q_j}[K_{q_1},[K_{q_2},\ldots[K_{q_j}, \hefflR{q_0}-\hefflR{q_0-1}]\ldots]]
	\end{align}
	
	The norm of this difference is
	\begin{align}
		\|\heffgR{n} - \Theta_R(n)\| \sim O(\pert^{n+1})
	\end{align}
	for sufficiently small $\pert$, since each $\hefflR{q_0}$ fails to commute with at most a constant number of local terms in each $K_{q}$.
	
	We now define a local decomposition of $\hat{\delta}$ making use of these $\Theta_R(n)$ operators. This decomposition is not unique, and we may choose it as convenient so long as it has the property that $\hat{\delta}=\sum_{s,r}\hat{\delta}_{s,r}$ for $\hat{\delta}_{s,r}$ acting only within a region of radius $r$ around site $s$. For our purposes, we are interested mainly in decay of $\hat{\delta}$ on long length scales, and so we simply collect all the terms with radius smaller than some critical length scale $\kappa$. Specifically, we choose $\kappa \sim n^2$ as the maximum radius of operators $\heffgR{n} - \Theta_R(n)$ over all $R$. With this in mind, for all $s$ we define
	\begin{align}
		\hat{\delta}_{s,r} &= 0 \qquad\mbox{for}\;r<\kappa\\
		\hat{\delta}_{s,\kappa} &= \heffgRs{n} - \Theta_{R_s}(n)
	\end{align}	
	where the regions $R_s$ here have been put into one-to-one correspondence with the sites $s$ in some canonical way.
	
	In a similar spirit, we need not define $\hat{\delta}_{s,r}$ for all $r$. Instead, we will only define it for some set of radii $r_k$ for each $k>n$, as follows
	\begin{align}
		\hat{\delta}_{s,r_k} = \Theta_{R_s}(k) - \Theta_{R_s}(k+1)\qquad\mbox{for}\;k>n
	\end{align}
	such that $\hat{\delta}_{s,r_k}$ acts within radius $r_k$ of site $s$ as required. Since $\Theta_{R}(k)$ is $O(kn)$-local, this implies that $r_k\sim O(kn)$. It can clearly be seen that $\sum_{s,r}\hat{\delta}_{s,r}=\hat{\delta}$. Given the facts that $\|\hat{\delta}_{s,\kappa}\|\sim O(\pert^{n+1})$ and (since $\|K^{\expect{n}}\|\sim O(\pert)$ and $n, \|\hefflR{n}\|=O(1)$)
	\begin{align}
		\|\hat{\delta}_{s,r_k}\|\leq \|[K^{\expect{n}},\cdot]^{k+1}\hefflRs{n}\|\leq O(\pert^{k+1})
	\end{align}
	we conclude that $\hat{\delta}$ is $\left(O(1),O(\pert^{O(1)})\right)$-quasi-local (when acting on the space $P_0$) as claimed.\end{proof}
	
	Now we can make use of the fact that $\|\hat{\delta}_{s,r}\|\leq O(\pert^{n+1})$ for all $s,r$ to provide some alternative quasi-local parameters for $\hat{\delta}$. Consider the following Lemma (presented without proof).
	
	\begin{lemma}\label{l:altdecaybound}
		Consider a function $f(r)$ where $f(r)\leq ab^r$ and $f(r)\leq c$ for $0<b<1$ and $a,c>0$. Then $f(r)\leq c^{1-\lambda}a^{\lambda} b^{r\lambda}$ for all $0<\lambda<1$.
	\end{lemma}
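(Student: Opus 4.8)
The plan is to treat the two hypotheses $f(r)\le ab^r$ and $f(r)\le c$ as complementary bounds, each sharp in a different regime---the exponential bound dominates for large $r$, while the constant bound dominates for small $r$---and to interpolate between them by a weighted geometric mean. The key observation is that since $f(r)$ is being used here as a bound on an operator norm $\|\hat\delta_{s,r}\|$, it is nonnegative, so raising either inequality to a positive fractional power preserves its direction.

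Concretely, I would first split $f(r)$ multiplicatively as $f(r)=f(r)^{1-\lambda}\,f(r)^{\lambda}$ for a fixed $\lambda\in(0,1)$, then bound each factor separately. Applying the constant bound to the first factor gives $f(r)^{1-\lambda}\le c^{1-\lambda}$, while applying the exponential bound to the second gives $f(r)^{\lambda}\le (ab^r)^{\lambda}=a^{\lambda}b^{r\lambda}$. Multiplying these two inequalities yields
\begin{align}
f(r)=f(r)^{1-\lambda}f(r)^{\lambda}\le c^{1-\lambda}a^{\lambda}b^{r\lambda},
\end{align}
which is precisely the claimed bound, valid for every $0<\lambda<1$.

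There is no substantive obstacle here; the only point requiring care is the nonnegativity of $f$, which is needed to ensure that exponentiating the two inequalities by the positive powers $\lambda$ and $1-\lambda$ does not reverse them. In the setting where this lemma is invoked this holds automatically, since $f(r)$ arises as a norm $\|\hat\delta_{s,r}\|$. The content of the lemma is simply that possessing both a uniform bound and an exponential-decay bound lets one trade away a constant factor of decay rate in exchange for an improved prefactor: the rate $b^{r}$ is softened to $b^{r\lambda}$, while the possibly large prefactor $a$ is replaced by $c^{1-\lambda}a^{\lambda}$. This is the form in which the estimate will then be fed into the quasi-locality bound for $\hat\delta$.
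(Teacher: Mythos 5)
Your proof is correct, and in fact the paper states this lemma explicitly without proof, so your geometric-mean interpolation $f(r)=f(r)^{1-\lambda}f(r)^{\lambda}\leq c^{1-\lambda}(ab^{r})^{\lambda}$ is precisely the standard one-line argument the authors presumably had in mind. One small remark: the nonnegativity of $f$ that you flag is needed only for the splitting step, not for the lemma itself---if $f(r)<0$ the claimed bound holds trivially since $c^{1-\lambda}a^{\lambda}b^{r\lambda}>0$---though in the paper's application $f(r)=\|\hat{\delta}_{s,r}\|$ is a norm and the point is moot, exactly as you observe.
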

	
	This implies that $\hat{\delta}$ is $\left(O(\pert^{(n+1)(1-\lambda)}),O(\pert^{\lambda O(1)})\right)$-quasi-local for any choice of $0<\lambda<1$. Particularly, let us choose $\lambda<\frac{1}{n+1}$. Importantly, this means that by choosing $\pert$ sufficiently small, we can make the first parameter of the quasi-local decay arbitrarily small compared to the $O(\pert^n)$ gap of $\heffg{n}$, and the second parameter can be made arbitrarily small simultaneously by decreasing $\pert$.
	
	Because $\heffg{n_*}$ is frustration free, satisfies Local-TQO and Local-Gap by assumption, and has a gap of $O(\pert^{n_*})$, we can apply the topological stability theorem of Ref.~\cite{Michalakis2011} for sufficiently small $\pert$ to show that the gap of $\heffg{n_*}$ remains $O(\pert^{n_*})$ along a path to $\e^{K^{\expect{n_*}}}\heffl{n_*}\e^{-K^{\expect{n_*}}}$. In particular, this shows that $\e^{K^{\expect{n_*}}}\heffl{n_*}\e^{-K^{\expect{n_*}}}$ is in the same phase as $\heffg{n_*}$.
	
	To complete the proof of Lemma \ref{l:globaltolocal}, we appeal to the following lemma, as proven in Ref.~\cite{Xie2010}
	
	\begin{lemma}[\cite{Xie2010}]\label{l:unrotate}
		Given a gapped quasi-local Hamiltonian $H$ and local Hamiltonian $X$, the ground states of $\e^{iX}H\e^{-iX}$ are in the same phase as those of $H$.
	\end{lemma}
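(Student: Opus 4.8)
The plan is to connect $H$ to $\e^{iX}H\e^{-iX}$ by the explicit one-parameter family generated by $X$ itself, and to verify that this family realises a quasi-local, gap-preserving adiabatic path in exactly the sense demanded by our definition of two ground states being in the same phase. First I would define
\begin{align}
    H(t) = \e^{itX}H\e^{-itX}, \qquad t\in[0,1],
\end{align}
so that $H(0)=H$ and $H(1)=\e^{iX}H\e^{-iX}$. The key elementary observation is that each $H(t)$ is a unitary conjugate of $H$, and therefore shares its entire spectrum; in particular the spectral gap of $H(t)$ is constant along the path and equal to the nonzero gap of $H$, so the gap never closes.

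Next I would track the ground space along this path. Since $H(t)=\e^{itX}H\e^{-itX}$, if $\ket{\psi}$ is a ground state of $H$ then $\e^{itX}\ket{\psi}$ is a ground state of $H(t)$ with the same energy; hence the ground space of $H(t)$ is simply $\e^{itX}$ applied to that of $H$, and its evolution is generated by $X$ through $\frac{d}{dt}\e^{itX}=iX\e^{itX}$. Because $X$ is a local (hence quasi-local) Hamiltonian, this is a quasi-local adiabatic evolution connecting $H$ to $\e^{iX}H\e^{-iX}$ without closing the gap. By our definition of \emph{same phase}, it then follows immediately that the ground states of $\e^{iX}H\e^{-iX}$ are in the same phase as those of $H$.

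The step requiring the most care is ensuring the quasi-locality of the \emph{path} and not merely of the generator: the definition of same phase presumes that the interpolating objects remain gapped quasi-local Hamiltonians, so one must check that conjugation by $\e^{itX}$ preserves the quasi-locality of $H$ for every $t\in[0,1]$. This is precisely where a Lieb--Robinson bound enters: for a local generator $X$ and finite evolution time $t\leq 1$, the map $\e^{itX}(\cdot)\e^{-itX}$ sends quasi-local operators to quasi-local operators with controlled (exponentially decaying) tails, so the quasi-locality of $H$ is inherited by each $H(t)$, and $\frac{d}{dt}H(t)=i[X,H(t)]$ is quasi-local. With this established, the spectral-flow generator tracking the ground space is quasi-local (indeed it can be taken to be $X$ directly), completing the argument. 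Since this is a standard fact we would in the final version simply invoke Ref.~\cite{Xie2010}, but the construction above makes explicit why the conclusion holds under the conventions used here.
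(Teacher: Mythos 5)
Your argument is correct and is essentially the proof the paper relies on: the paper states this lemma without proof, deferring to Ref.~\cite{Xie2010}, where phase equivalence is precisely characterized by finite-time evolution under a local Hamiltonian, and your path $H(t)=\e^{itX}H\e^{-itX}$ with constant spectrum, ground space $\e^{itX}$ applied to that of $H$, and Lieb--Robinson control of the quasi-locality of each $H(t)$ is exactly that standard construction. No gap to report.
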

		
	Because $iK^{\expect{n}}$ can be regarded as a local Hamiltonian, the proof of Lemma \ref{l:globaltolocal} is an immediate corollary to Lemma \ref{l:unrotate}. That is, the local SW effective Hamiltonian at order $n_*$ has the desired quasi-injective, isometric, topologically ordered PEPS as its ground state. This is the main result of this section.
	
	Before we proceed, it will be useful to demonstrate an additional property of the effective Hamiltonian $\heffl{n_*}$. In the following sections, we would like to apply the topological stability theorem (\Tref{t:topologicalstability}) to demonstrate that the ground space of $\heffl{n_*}$ is stable against some additional terms. Since the Local-TQO property itself is stable against perturbations which do not close the (local) gap~\cite{Cirac2013}, we can also argue that $\heffl{n_*}$ satisfies Local-TQO and Local-Gap. Unfortunately we have not shown that $\heffl{n}$ is frustration-free, and so we cannot directly apply \Tref{t:topologicalstability}. However, the following Lemma will allow us to leverage the topological stability of $\heffg{n_*}$ to prove topological stability of $\heffl{n_*}$.
	\begin{lemma}\label{l:localstable}
		Consider a Hamiltonian $H_{TO}$ satisfying the assumptions of \Tref{t:topologicalstability}. That is, for any quasi-local perturbation $V$ there exists some $\pert<0$ such that $H_{TO}+\pert V$ is in the same phase as $H_{TO}$. Then for each Hamiltonian $H'$ in the same phase as $H_{TO}$, there exists some $\pert'<0$ such that $H'+\pert' V$ is in the same phase as $H'$.
	\end{lemma}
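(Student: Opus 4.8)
The plan is to exploit the fact that ``same phase'' is an equivalence relation generated by gapped, quasi-local paths, and to reduce the claim to the explicit construction of one such path from $H'$ to $H'+\pert' V$. Since $H'$ is in the same phase as $H_{TO}$ by hypothesis, there is a gapped quasi-local path $P(s)$, $s\in[0,1]$, with $P(0)=H_{TO}$ and $P(1)=H'$. I would build the desired path as a loop through $H_{TO}$ in three segments: first traverse $P$ in reverse from $H'$ back to $H_{TO}$; then switch on the perturbation at the frustration-free endpoint, following $H_{TO}+t\,\pert' V$ for $t\in[0,1]$; and finally run $P$ forward again, but now with the fixed term $\pert' V$ added to every Hamiltonian on it, i.e.\ $P(s)+\pert' V$, arriving at $H'+\pert' V$.

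Two of the three segments are immediate. The reversed path $P(1-t)$ is gapped and quasi-local because $P$ itself is, by the definition of same phase. The switch-on segment $H_{TO}+t\,\pert' V$ is gapped for every $t\in[0,1]$ provided $\pert'$ lies below the stability threshold supplied by \Tref{t:topologicalstability}, which applies precisely because $H_{TO}$ is frustration-free and satisfies Local-TQO and Local-Gap. It is also useful to record two facts about the original path that will be needed in the hard segment: by compactness of $[0,1]$ and continuity of the spectrum, $P$ has a uniform gap lower bound $2\gamma>0$; and since Local-TQO is itself stable under perturbations that do not close the local gap~\cite{Cirac2013}, each $P(s)$ inherits the Local-TQO and Local-Gap conditions from $H_{TO}=P(0)$.

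The main obstacle is the third segment, where $\pert' V$ must be carried along the entire path $P(s)+\pert' V$. Here \Tref{t:topologicalstability} cannot be invoked directly, because the intermediate Hamiltonians $P(s)$ are in general not frustration-free. My plan to overcome this is to replace each $P(s)$ by a frustration-free representative of its ground space: applying quasi-adiabatic continuation along $P$ yields a quasi-local unitary $U_s$ with $U_s\,|\mathrm{gs}(H_{TO})\rangle=|\mathrm{gs}(P(s))\rangle$, so that $\hat H_s\equiv U_s H_{TO} U_s^{\dagger}$ is frustration-free, quasi-local, shares the ground space of $P(s)$, and still satisfies Local-TQO and Local-Gap (that $\hat H_s$ is in turn in the same phase as $H_{TO}$ follows from the quasi-local extension of Lemma~\ref{l:unrotate}). \Tref{t:topologicalstability} then applies to each $\hat H_s$, so $\hat H_s+\pert' V$ stays gapped and lies in the same phase as $\hat H_s$ for small $\pert'$, while $\hat H_s$ is in the same phase as $P(s)$ because the two share a ground space and are both gapped, their convex interpolation retaining that ground space and gap. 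The remaining and genuinely delicate step is to transfer this gap stability from $\hat H_s+\pert' V$ to the true path Hamiltonian $P(s)+\pert' V$, using only that $\hat H_s$ and $P(s)$ share a ground space and obey common uniform Local-TQO/Local-Gap bounds. I expect this transfer --- carried out while the perturbation is switched on, and made uniform in $s$ so that a single $\pert'$ suffices along the whole segment --- to be the crux of the argument; controlling the stability threshold uniformly in $s$ is what ultimately fixes the admissible range of $\pert'$ in the statement.
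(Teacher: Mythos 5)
Your proposal stalls exactly where you say it does, and the stalled step is not a delicate technicality but the whole difficulty: there is no tool available to transfer gap stability from $\hat H_s+\pert' V$ to $P(s)+\pert' V$. The threshold supplied by \Tref{t:topologicalstability} is a property of the specific frustration-free Hamiltonian and its \emph{local} data --- the ground-space projectors $P_0(R)$ of the restricted Hamiltonians $H_R$ and their local gaps --- not of the global ground space alone. For a frustrated $P(s)$ the restrictions $H_R$ need not have ground spaces bearing any relation to the global one, so the ``common uniform Local-TQO/Local-Gap bounds'' you hope to invoke are simply not defined for $P(s)$ in the form the stability machinery of Ref.~\cite{Michalakis2011} requires; frustration-freeness is an essential hypothesis there, not a convenience. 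Two gapped Hamiltonians sharing a ground space can respond entirely differently to the same perturbation (their spectra above the ground space, and their local gap structure, are unconstrained by the shared ground space), so completing your third segment would amount to proving a stability theorem for frustrated Hamiltonians --- an open problem, not a crux one expects to resolve by uniformizing constants in $s$.

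The paper's proof avoids the entire issue with a single conjugation: instead of transporting the fixed perturbation $V$ along the path, transport $V$ into the frame of $H_{TO}$. Since $H'$ and $H_{TO}$ are in the same phase, there is a smooth, invertible, quasi-local transformation $\mathcal{T}$ with $\mathcal{T}(H_{TO})=H'$; then $V'\equiv\mathcal{T}^{-1}(V)$ is again quasi-local, \Tref{t:topologicalstability} is applied \emph{once}, at the frustration-free point $H_{TO}$, to conclude that $H_{TO}+\pert' V'$ is in the same phase as $H_{TO}$, and applying $\mathcal{T}$ gives $\mathcal{T}(H_{TO}+\pert' V')=H'+\pert' V$ in the same phase as $H'$. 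Notice that your own construction already contains the ingredients for this: with the quasi-adiabatic unitary $U_1$ you introduce, set $V'=U_1^{\dagger}VU_1$ and conjugate the gapped switch-on segment $H_{TO}+t\,\pert' V'$ by $U_1$ (your quasi-local extension of Lemma~\ref{l:unrotate} is what justifies this step); the image is a gapped quasi-local path ending at $H'+\pert' V$, and the frustrated intermediate Hamiltonians $P(s)+\pert' V$ never need to be considered.
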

	\begin{proof}
		Define a smooth, invertible, linear quasi-local transformation $\mathcal{T}$ that relates $H_{TO}$ and $H'$, i.e.~$\mathcal{T}(H_{TO})=H'$. We are guaranteed that such a transformation exists from the fact that $H'$ and $H_{TO}$ are in the same phase. This implies that $\mathcal{T}^{-1}(V)\equiv V'$ is quasi-local and $\mathcal{T}^{-1}(H'+\pert' V)=H_{TO}+\pert' V'$. Because there exists some $\pert'$ such that $H_{TO}+\pert' V'$ is in the same phase as $H_{TO}$, this also implies that $H'+\pert' V$ is in the same phase as $H'$.
\end{proof}

	\subsection{Stability to higher order contributions}	
	
	Now that we have shown that $\heffl{n_*}$ is in the same phase as $\heffg{n_*}$, we can make use of the explicit locality structure of the local SW transformation to bound the effect of higher order contributions to the effective Hamiltonian. The main technical result we will derive here is the fact that $H_{\mathrm{garbage}}$ is quasi-local, and so the ground state and gap of $\heffl{n_*}$ are stable under addition of this garbage term. Following this, we can define a sequence of effective Hamiltonians to arbitrary order (very similar to the local SW effective Hamiltonians) that are in the same phase as $\heffl{n_*}$ to arbitrary finite order, completing the proof of Lemma \ref{l:highorderstable}. 	
	
	Our analysis is based on the fact that $\hl{n}$ obeys the linked cluster theorem. This follows directly from the fact that $V_{\l}^{(j-1)}$ is $(j+1)$-local. Any non-local term arising in the expansion of $V_{\l}^{(j)}$ must vanish. Defining $V^{(j)}_{s,\l}=\sum_{R\ni s} V^{(j)}_{R,\l}$, we can also bound the strength of $V_{\l}^{(j)}$ as
	\begin{align}
		\|V_{\l}^{(j)}\|_{\mathrm{max}}= \max_{s\in \graph}\|V^{(j)}_{s,\l}\| \leq \alpha\left(\frac{n^2}{\Delta_0\beta}\right)^j
	\end{align}
	for some constants $\alpha, \beta > 0$~\cite{Bravyi2011}.
	
	Now, because $V^{(j)}_{R,\l}$ are components of $V^{(j)}_{\l}$ in an orthogonal operator basis, removing some set of them from $V^{(j)}_{s,\l}$ cannot increase $\|V^{(j)}_{s,\l}\|$. That is, $\left\|\sum_{R'\subseteq \{R:R\ni s\}}V^{(j)}_{R',\l}\right\|\leq\|V^{(j)}_{s,\l}\|$.
	
	This allows us to define a decomposition of $H_{\mathrm{garbage}}$ (recall \Eref{e:garbagedef}) into terms acting within a region of radius $r$ around each site $s$. These will be operators of the form $\pert^{r-1}\sum_{R'\subseteq \{R:R\ni s\}}V^{(r-2)}_{R',\l}$. For a fixed $n$, we then have that for $\pert< \frac{\Delta_0\beta}{n^2}$ we can bound the norms of these operators by the exponential decay 
\begin{equation}
	\Bigl\|\pert^{r-1}\sum_{R'\subseteq \{R:R\ni s\}}V^{(r-2)}_{R',\l}\Bigr\| \leq \left(\frac{\alpha\Delta_0^2\beta^2}{\pert n^{4}}\right)\left(\frac{\pert n^2}{\Delta_0\beta}\right)^{r}\,.
\end{equation} 
Therefore, $H_{\mathrm{garbage}}$ is $\left(O(\pert^{-1}),O(\pert)\right)$-quasi-local.
	
	Because we also know that $\|H_{\mathrm{garbage}}\|$ does not include terms $V^{(r-2)}$ for $r\leq n+1$, we can give an alternative bound on the decay parameters of $H_{\mathrm{garbage}}$. Making use of Lemma \ref{l:altdecaybound}, we find that $H_{\mathrm{garbage}}$ is also $\left(O(\pert^{n+1-\lambda(n+2)}),O(\pert^{\lambda})\right)$-quasi-local.  Importantly, for $\lambda<\frac{1}{n+2}$, the first of these parameters is $O(\pert^{n+\delta})$ for $\delta>0$, and both parameters can be made arbitrarily small by decreasing $\pert$. This is convenient as it allows us to use the topological stability theorem to analyse the stability of $\heffl{n}$ to contributions from $H_{\mathrm{garbage}}$. 
	
	As the gap of $\heffl{n_*}$ is $O(\pert^{n_*})$, by making $\pert$ small enough, we can make the strength of $H_{\mathrm{garbage}}$ arbitrarily small compared to the gap for $\lambda<\frac{1}{n_*+2}$. By the topologically stability theorem, we can find $\pert>0$ such that $\heffl{n_*}+H_{\mathrm{garbage}}$ is in the same phase as $\heffl{n_*}$ and has a gap of $O(\pert^{n_*})$.
	
	Given this result, we can write a sequence of effective Hamiltonians of the form 
		\begin{align}
		H_{\mathrm{eff,loc+}}^{\expect{k}}&\equiv P_0e^{T^{\expect{k}}}H e^{-T^{\expect{k}}}P_0\\
		&=H_{\mathrm{eff,loc}}^{\expect{k}}+P_0H_{\mathrm{garbage}}^{\expect{k}}P_0
	\end{align}
	where we note that $H_{\mathrm{garbage}}$ has previously been implicitly dependent on the order of perturbation theory, and so we restore this explicit dependence here. Given the fact that $\heffl{n_*}+H_{\mathrm{garbage}}^{\expect{n_*}}$ is in the same phase as $\heffl{n_*}$, we can now write for any $k\geq n_*$
	\begin{align}
		H_{\mathrm{eff,loc+}}^{\expect{k}}&=P_0e^{T^{\expect{k}}}e^{-T^{\expect{n_*}}}\left(\hl{n_*}+H_{\mathrm{garbage}}^{\expect{n_*}}\right)e^{T^{\expect{n_*}}}e^{-T^{\expect{k}}}P_0\\
		&=P_0e^{T^{\expect{k}}}e^{-T^{\expect{n_*}}}\left(\heffl{n_*}+H_{\mathrm{garbage}}^{\expect{n_*}}\right)e^{T^{\expect{n_*}}}e^{-T^{\expect{k}}}P_0
	\end{align}
	Because $iT^{\expect{k}}$ are local Hamiltonians for any finite $k$, we can appeal to Lemma \ref{l:unrotate} to show that $H_{\mathrm{eff,loc+}}^{\expect{k}}$ and $\heffl{n_*}$ are indeed in the same phase when restricted to $P_0$, and because both Hamiltonians act trivially outside of $P_0$, this concludes the proof of both Lemma \ref{l:highorderstable} and \Tref{t:stability}.
	

\section{Discussion}\label{s:discuss}

	Our construction yields a 2-body Hamiltonian described by \Eref{e:fullHami} that is a gapped parent Hamiltonian for a state in the same phase as a desired PEPS to all orders of perturbation theory.  We have made use of the fact that two ground states of gapped local Hamiltonians are in the same phase if one Hamiltonian can be smoothly deformed into the other without closing the gap. This also implies that expectation values of local observables deform smoothly along the same path.  Because in the limit $\pert\to 0$ both the global and local Schrieffer-Wolff transformations tend to the identity, our construction gives a parent Hamiltonian for a state which tends towards the desired PEPS state as $\pert\to 0$.  Expectation values of local observables can therefore be made arbitrarily close to those of the PEPS under consideration by choosing $\pert$ arbitrarily small.

	\subsection{Locality on virtual qudits}\label{s:reducecomplex}
	
	Our model as defined in \Sref{s:model} gives a 2-local Hamiltonian where the code gadgets of our construction are considered as indivisible quantum systems. Instead of treating a code gadget as a single quantum system, we could also be interested in implementing the virtual qudits as distinct physical systems. The code gadget Hamiltonian as defined at a site $s$ would then involve $\mathrm{deg}(s)$-body interactions in general.  There are two strategies that can be applied to also reduce these interactions to 2-body terms on the virtual qudits. The first approach is more elegant but less general, while the second is universally applicable.
	
	The most important feature of a code gadget that we must preserve in a procedure like this is its ground space. With this in mind, the first strategy involves simply finding an explicit 2-body Hamiltonian whose ground space is identical to the code gadget Hamiltonian in \Eref{e:unpertHami}. This is the approach taken in Refs.~\cite{Bartlett-2006} and \cite{Brell-2011}. When making use of this strategy, the modification of the spectral structure of the code-gadget Hamiltonian means that our technical proofs in \Sref{s:perturb} are not immediately applicable (particularly Lemma \ref{l:oconstlemma}). However, it seems reasonable to conjecture that the construction is insensitive to these details and similar results could be found for any sensible choice of code gadget Hamiltonian.
	
	The second strategy is to simply apply more conventional perturbation gadget techniques to the code gadget Hamiltonian directly, reducing it to 2-body interactions using further perturbative ancillae (for example following Ref.~\cite{Jordan2008}). Because these perturbation gadgets could be applied within each code gadget separately, they would be approximating systems involving only a fixed finite number of qudits. For this reason, many of the difficulties with applying general perturbation gadgets to infinite systems could be avoided. Additionally, because the effective Hamiltonian of this system will be identical to \Eref{e:unpertHami}, we expect that the proofs in \Sref{s:perturb} could be adapted to this situation (given an appropriately chosen perturbative hierarchy). Although it could be applied to an arbitrary system, this is clearly the less elegant option.

	\subsection{Symmetries of the model}\label{s:symmetry}
	
		In previously studied examples of these techniques~\cite{Bartlett-2006, Brell-2011}, the local symmetries of the states were exactly captured by this construction. That is, for each local symmetry of these models, a corresponding encoded symmetry can be found which commutes with the full Hamiltonian of the system, including the perturbative couplings. In generic perturbative approaches one would expect only to recover these symmetries approximately.
	
	Exactly capturing the local symmetries of the target states is not a feature of our construction in general. There exist cases where some or all of the local symmetries are exactly captured, but this does not seem to be generic. An example of this is shown in \Sref{s:egds}, where a subset of the full symmetries of the model are preserved exactly, and the rest only preserved approximately.

	\subsection{Application to RVB states}\label{s:rvb}

		Constructing and analysing parent Hamiltonians for resonating valence bond (RVB) states is of interest for modelling spin liquids and other exotic quantum phases.  Methods to construct such parent Hamiltonians for the Kagome lattice require at least 12-body interactions~\cite{Seidel2009}. Recently, an alternative construction for parent Hamiltonians of these states has been proposed, based on a PEPS representation of the RVB states on this lattice~\cite{Schuch2012}. Canonical parent Hamiltonians for this PEPS require at least 19-body interactions.
		
		Because the RVB PEPS is $\Z_2$-injective~\cite{Schuch2012}, we anticipate that our analysis could be applied to this case and as such a 2-body parent Hamiltonian of the form (\ref{e:fullHami}) may be obtained for a state in the same phase as the RVB state (and in the limit that the perturbation parameter vanishes, should reproduce the RVB state precisely).  In this context, we remind the reader that our construction yields an encoded version of the desired state.  The dimension of the Hilbert spaces associated with each site will be larger, and there will also be ancilla systems required to mediate coupling between the sites (corresponding to tensors with no physical indices in the description of Ref.~\cite{Schuch2012}).


\section{Example: The Double Semion Model}\label{s:egds}
	
		We now present an illustrative example of our construction.	The double semion model is a simple example of a string-net model~\cite{LevinSN05}, whose ground states are known to have exact PEPS descriptions~\cite{Gu-09, Buerschaper-09}. In fact, the double semion model has a particularly simple PEPS description~\cite{Gu2008, Gu-09} that we can exploit to construct a 2-body system whose low energy effective Hamiltonian is an encoded parent Hamiltonian for the double semion ground space.	These states are examples of both $(G,\omega)$-injective PEPS and MPO-injective PEPS.
		
		We will use the double semion model to demonstrate some of the features of our construction. The analysis in this section should be understood as illustrative of the features of the Hamiltonian (\ref{e:fullHami}) rather than as an example of the theorems proven in Sections \ref{s:perturb} and \ref{s:stable}. As such, we make use of a simplified formalism that sacrifices some rigor for clarity. The general analysis as shown in Sections \ref{s:perturb} and \ref{s:stable} can be applied to this example to demonstrate the relevant features more rigorously.
				
		\begin{figure}
		\centering
		\subfloat[]{\label{F:DShexlattice}
		\includegraphics{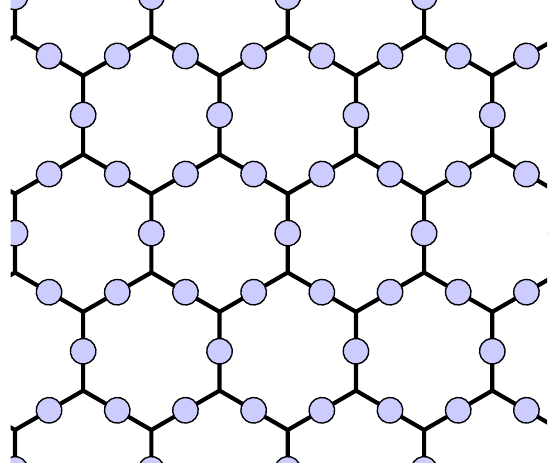}
		}
		\hspace{0.4cm}
		\subfloat[]{\label{F:DSdoublelattice}
		\includegraphics{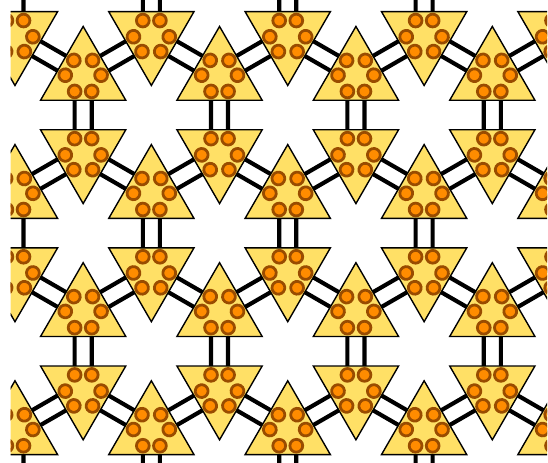}
		}
		\hspace{0.2cm}
		\subfloat[]{\label{F:DSvertexA}
		\includegraphics{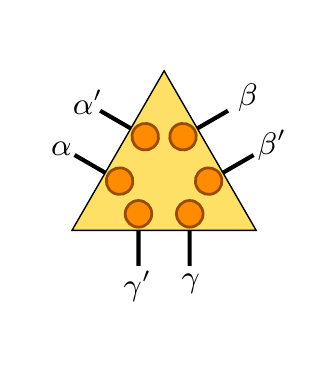}
		}
		\caption{(a) The double semion model is typically presented on a honeycomb lattice with qubits (blue) on links. (b) The PEPS representation of the same region of the lattice. At vertices of the honeycomb lattice, we place 6 virtual (orange) qubits, which will be projected into a $4$-dimensional (yellow) code qudit. Solid lines here denote edges of the PEPS graph, and so connect virtual qubits in maximally entangled states. (c) Virtual qubit labels. For triangles oriented in the opposite direction, rotate these definitions by $\pi$. States are labelled as $\ket{\alpha\alpha';\beta\beta';\gamma\gamma'}$.}
		\label{F:DSlattice}
		\end{figure}
		

		The double semion state is typically defined on a honeycomb lattice with qubits on the edges, as in \Fref{F:DShexlattice}. It is conventionally defined as the ground state of the Hamiltonian~\cite{LevinSN05}
		\begin{align}\label{e:dshamdefine}
			H_{\mathrm{ds}}=-\sum_v\prod_{j\sim v}\sigma^z_j+\sum_p\left(\prod_{k\in p}\sigma^x_k\right)\left(\prod_{m\sim p}i^{\frac{1-\sigma^z_m}{2}}\right)
		\end{align}
		where $v$ ($p$) are vertices (plaquettes) of the honeycomb lattice, $\sigma$ are the Pauli matrices, $j\sim v$ runs over qubits incident to vertex $v$, $k\in p$ runs over edges bounding plaquette $p$, and $m\sim p$ runs over edges incident to $p$ (i.e.~edges sharing one vertex with $p$). 
		
		We can represent the ground state of this Hamiltonian as a PEPS by placing 2 pairs of maximally entangled pairs of qubits between each vertex of the honeycomb lattice, as shown in \Fref{F:DSdoublelattice}, and applying a projection map $\pmap_s$ at each vertex to map from the $2^6$-dimensional virtual space to a $4$-dimensional \physicalspaceend.  (Note this is a slightly different PEPS representation for the double semion ground state as compared to those presented in Refs.~\cite{Gu2008, Gu-09, Buerschaper-09}.) The correspondence between these \physicalqudits and the qubits of the double semion model is not obvious at this stage, but will become clear as we proceed. At this point we should emphasise the distinction between the honeycomb lattice, on which the double semion model is typically defined (\Fref{F:DShexlattice}), and the PEPS lattice, whose edges correspond to maximally entangled virtual pairs (\Fref{F:DSdoublelattice}). In particular, we stress that the sites of the PEPS lattice (where $\pmap_s$ is applied) correspond to vertices of the honeycomb lattice, and not edges.
		
		For the most part of this analysis, we will neglect normalization for the sake of clarity. With this in mind, and the labelling conventions of \Fref{F:DSvertexA}, we can write the projection map $\pmap_s$ as:
		\begin{align}
			\pmap_s &= \sum_{\alpha,\beta,\gamma,i,j,k \in \Z_2} T_{\alpha\beta\gamma}\cdot\delta_{i=\alpha+\beta}\delta_{j=\beta+\gamma}\delta_{k=\gamma+\alpha} \ket{ijk}_{\p}\bra{\alpha\beta;\beta\gamma;\gamma\alpha}_{s}
		\end{align}
		where the $\p$ subscript on the ket indicates that it is a \physicalspace state, addition is modulo 2, and we have defined
		\begin{align}
			T_{\alpha\beta\gamma}&= \left\{\begin{array}{cc}1&\qquad\alpha+\beta+\gamma = 0,3\\i&\qquad\alpha+\beta+\gamma = 1\\-i&\qquad\alpha+\beta+\gamma = 2\end{array}\right.
		\end{align}		
		
		It should be clear that although there are in principle $8$ states that could be labelled by $i$, $j$, and $k$, these variables are not independent. In fact, there are only $4$ non-vanishing states of this form, given explicitly by
		\begin{align}
			\pmapbrac{}^{\dagger}\ket{000}_{\p} &= \ket{00;00;00}+\ket{11;11;11}\label{e:ijkstate1}\\
			\pmapbrac{}^{\dagger}\ket{110}_{\p} &= i\ket{10;01;11}-i\ket{01;10;00}\\
			\pmapbrac{}^{\dagger}\ket{101}_{\p} &= i\ket{01;11;10}-i\ket{10;00;01}\\
			\pmapbrac{}^{\dagger}\ket{011}_{\p} &= i\ket{11;10;01}-i\ket{00;01;10}\label{e:ijkstate4}
		\end{align}
		
		We call these values of $\{ijk\}$ the \textit{allowed} values. The site projector is then $P_s = \sum_{\{ijk\}}\pmapbrac{s}^{\dagger}\proj{ijk}_{\p,s}\pmap_s$, where the sum only runs over allowed values of $\{ijk\}$.
		
		The reason we use this redundant description of these states is that it allows us to identify the states $i$, $j$, and $k$ as the states of the qubits of the double semion model. That is, each variable is associated with an edge of the honeycomb model. That some states (e.g. $\ket{100}_p$) are not in the image of $\pmap_s$ is a consequence of the fact that these states do not belong to the double semion ground state, which is what this PEPS describes. One might also worry that we have two variables labelling the state of each edge (one for each vertex on which the edge ends). However, we will see that this is resolved in our analysis.
		
		Our construction proceeds by simulating the projection maps with code gadgets, using a Hamiltonian of the form of Eq.~(\ref{e:fullHami}).  Recall that the Hamiltonian for a code gadget is simply $Q_s =1-P_s$.  Because our virtual systems are qubits, each edge of the PEPS lattice has an associated operator
		\begin{align}
			M_e &= \proj{00}_e+\proj{11}_e+\ketbra{00}{11}_e+\ketbra{11}{00}_e
		\end{align}
		
		The full Hamiltonian of our system is then given by
		\begin{align}
			H&=\sum_sQ_s-\pert\sum_eM_e
		\end{align}
		where $s$ ($e$) runs over the sites (edges) of the PEPS lattice.

	\subsection{Effective Hamiltonian}
			
		In this example, we will not use the more rigorous global Schrieffer-Wolff perturbation method as in Sections \ref{s:gsw} and \ref{s:gsgsw}. We instead use the simpler self-energy expansion as used in Refs.~\cite{Kitaev-06, Brell-2011}. This amounts to neglecting the $\Gamma(q_1,\ldots,q_j)$ terms in the global SW expansion with any $|q_i|\neq 1$.
		
		Given the Hamiltonian
		\begin{equation}
			H=H_0+\pert V
		\end{equation}
		the self-energy low energy effective Hamiltonian is given by 
		\begin{align}\label{e:effHamSE}
			H_{\mathrm{eff}, \mathrm{SE}} &= E_0 + \sum_n \pert^n \pr_0 V\left(\frac{Q_0}{H_0}V\right)^{(n-1)}\pr_0
		\end{align}
		where $\pr_0$ is the projector to the ground space of $H_0$ with energy $0$, and $\frac{Q_0}{H_0}$ is defined to vanish on ground states of $H_0$. In writing the effective Hamiltonian in this way, we have neglected the dependence of the ground state energy on the perturbation. For our purposes, $O(1)$ constants are unimportant, so we will commonly neglect them in our analysis.			
			
		If we now explicitly evaluate the expansion (\ref{e:effHamSE}) for the double semion model, we will see that the terms arising will  provide a parent Hamiltonian for the desired state. We have for the low energy effective Hamiltonian
		\begin{align}\label{e:effHamDS}
			H_{\mathrm{eff},\mathrm{ds}} &= -\sum_n \pert^n \pr_0 \left(\sum_eM_e\right)\left(\frac{Q_0}{H_0}\left(\sum_eM_e\right)\right)^{(n-1)}\pr_0
		\end{align}
		
		We now evaluate this sum order by order. At $0\th$ order, the effective Hamiltonian can simply be taken to be
		\begin{align}
			H_{\mathrm{eff},\mathrm{ds}}^{\expect{0}} &= -\pr_0
		\end{align}
		This term may seem trivial, but in fact it is not so if we consider it in terms of the qubits in the double semion model. This term enforces the constraint that only allowed values of $\ket{ijk}_p$ at each vertex are in the ground space. Thinking about these variables $i$, $j$, and $k$ as labels for the states of the three qubits on the edges incident to any given vertex, this constraint plays the same role as the term $-\prod_{j\sim v}\sigma^z_j$ in Eq.~(\ref{e:dshamdefine}).

	At $1\st$ order, we find that the only terms to appear are also proportional to $\pr_0$, and so we can absorb them into constants of the $0\th$ order Hamiltonian. At $2\nd$ order, non-trivial terms can appear corresponding to each edge of the honeycomb lattice. Neglecting $O(1)$ constants, the effective Hamiltonian will take the form
		\begin{align}\label{e:ds2order}
			H_{\mathrm{eff},\mathrm{ds}}^{\expect{2}} &\sim -\pr_0-\pert^2\sum_{\substack{s\\s'\sim s}}\pr_0 C_{s,s'}\pr_0\\
			C_{s,s'}&\equiv\sum_{\substack{i,j,k\\i',j',k'}}\pmapbrac{s,s'}^{\dagger}\delta_{i'j'k'}^{ijk}(s,s')\proj{ijk}_{s}\cdot \proj{i'j'k'}_{s'}\pmap_{s,s'}
		\end{align}
		where $s'\sim s$ runs over all $s'$ neighbouring $s$ (with $s$ and $s'$ sites of the PEPS lattice). The function $\delta_{i'j'k'}^{ijk}(s,s')$ takes the form of a Kronecker delta $\delta_{ii'}$, $\delta_{jj'}$, or $\delta_{kk'}$ for $s$ and $s'$ connected by an edge running northeast, northwest, or vertically respectively. These second order terms arise from the product of $M_e$ on the two PEPS edges connecting $s$ and $s'$. Recalling that each site has its own label for the state of the double semion qubit on incident edges of the honeycomb lattice, the $C_{s,s'}$ terms can be interpreted as requiring that the two labels for the state (one each from $s$ and $s'$) are consistent. This resolves the apparent overcounting of degrees of freedom present in the model.
		
		If we continue expanding the effective Hamiltonian order by order, we will find that no new terms (that are not products of $2\nd$ order terms) arise until $6\th$ order. At this order, a new term will arise from the product of $M_e$ terms around the inside of a hexagonal plaquette. We can write the effective Hamiltonian (neglecting products of $2\nd$ order terms and constant factors) as
		\begin{align}\label{e:dseffham6}
			H_{\mathrm{eff},\mathrm{ds}}^{\expect{6}} &\sim -\pr_0-\pert^2\sum_{\substack{s\\s'\sim s}}\pr_0 C_{s,s'}\pr_0-\pert^6\sum_p \pr_0 B_p\pr_0
		\end{align}
		where the action of $B_p$ can be described as
	\begin{align}
	B_p\sim \prod_{e=(v,v')\in p}\left(\ketbra{00}{11}_{v,v'}+\ketbra{11}{00}_{v,v'}\right)
	\end{align}
	with $e=(v,v')$ the edges of the PEPS lattice comprising the interior of the plaquette $p$ (see \Fref{F:DSBp}).			
		
\begin{figure}
		\centering
		\subfloat{
		\includegraphics{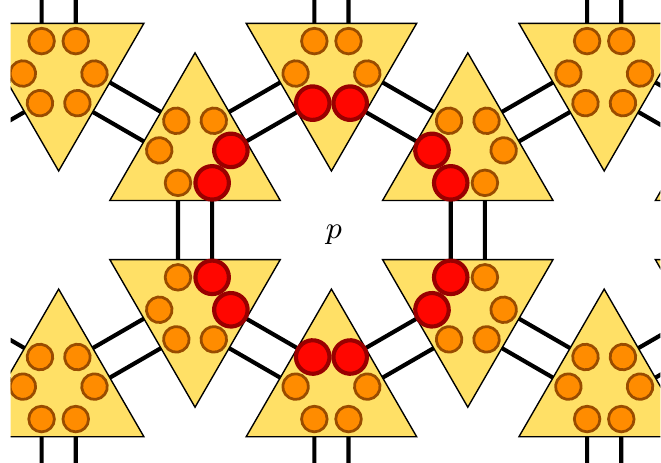}
		}
		\caption{The $B_p$ operator acts on the edges of the PEPS lattice closest to the centre of the plaquette $p$. The affected qubits are shown here in red.}
		\label{F:DSBp}
		\end{figure}

	We can more clearly examine the effect of $B_p$ by restricting to the image of all $P_0C_{s,s'}P_0$. Call the projector to this subspace $P_C$, and note that it is the ground space of $H_{\mathrm{eff},\mathrm{ds}}^{\expect{2}}$. Within this subspace, we can unambiguously assign \physicalspace state labels to the edges of the honeycomb lattice as in the standard definition of the double semion model. If we then evaluate $P_CB_pP_C$, we find that phases accumulate depending on the state of the edges leading out of the plaquette under consideration (the legs of the plaquette). This is due to the asymmetry between the phase factors defining the $\ket{ijk}_p$ states of Eqs.~(\ref{e:ijkstate1})-(\ref{e:ijkstate4}).
		
		On the double semion model states (i.e.~those associated with the honeycomb lattice), we can describe this by
		\begin{align}
			P_CB_pP_C &\sim 
			\left(\prod_{k\in p}\sigma^x_k\right)\left(\prod_{m\sim p}i^{\frac{1-\sigma^z_m}{2}}\right)
		\end{align}
		where $k$ runs over honeycomb lattice edges comprising $p$ and $m$ runs over edges incident to $p$, precisely as in \Eref{e:dshamdefine}.
		
	Given also that the $P_0B_pP_0$ commute with the $P_0C_{s,s'}P_0$, this completes the specification of the ground space of this model. The effect of each type of term arising in the effective Hamiltonian on the low energy space can be summarized as follows.
			
		\begin{tabular}{p{0.165\columnwidth}p{0.796\columnwidth}}
		\qquad\textbf{$\mathbf{0^{\mathrm{\textbf{th}}}}$ order}: & Forbids disallowed vertex configurations (branching strings)\\
		\qquad\textbf{$\mathbf{2^{\mathrm{\textbf{nd}}}}$ order}: & Enforces consistency between the two descriptions of the state on each edge of the honeycomb lattice (only one qubit per edge)\\
		\qquad\textbf{$\mathbf{6^{\mathrm{\textbf{th}}}}$ order}: & Gives rise to plaquette energetics of the double semion model		
		\end{tabular}
		
		Each of these types of term acts on a different characteristic energy scale, based on the order of perturbation theory at which they arise. This gives the spectrum of our system as in \Fref{F:DSenergetic}. The ground state can easily be identified as (an encoded form of) the double semion ground state.
		
\begin{figure}
		\centering
		\subfloat{
		\includegraphics{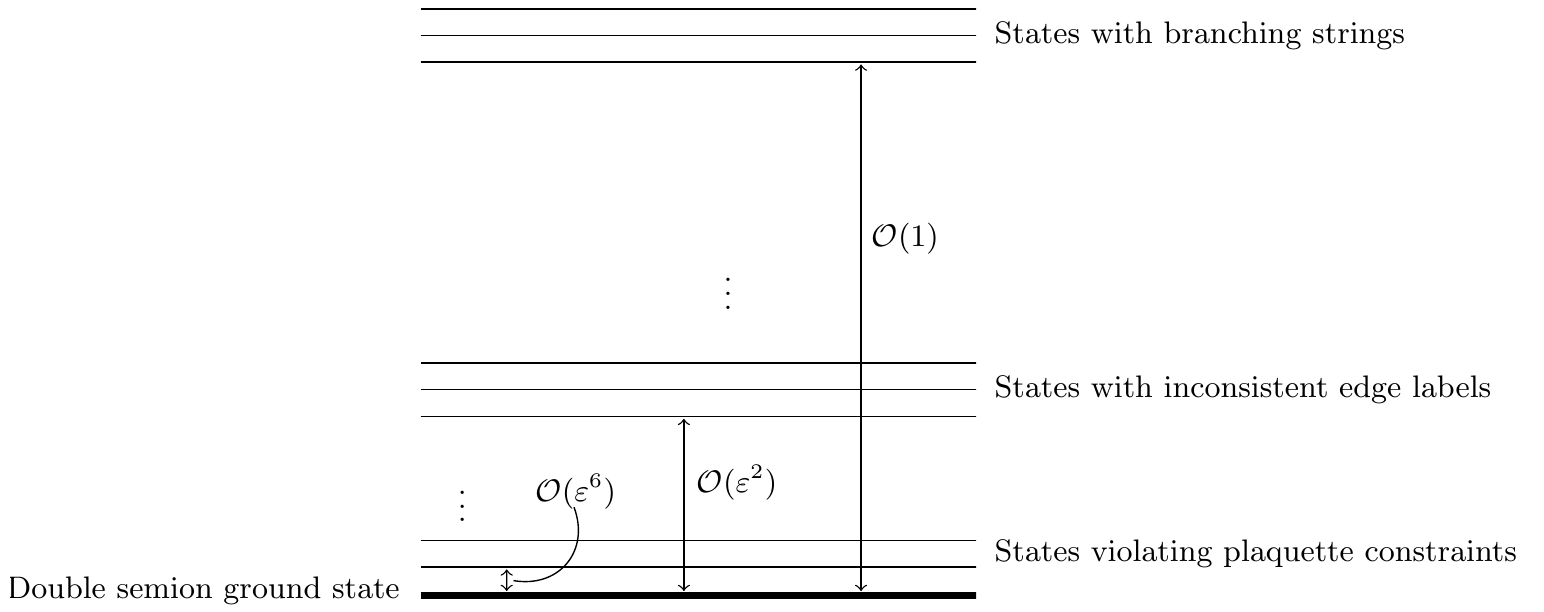}
		}
		\caption{A schematic of the energy level structure of our effective Hamiltonian for the double semion model defined by Eq.~(\ref{e:dseffham6}), showing the lowest excited states for each characteristic energy scale (each order of perturbation theory).}
		\label{F:DSenergetic}
		\end{figure}

		In previous examples of this kind of construction~\cite{Bartlett-2006,Brell-2011}, the local symmetries of the target model were exact symmetries of the full Hamiltonian (including perturbation). However, as discussed in \Sref{s:symmetry}, this is not a general feature of our construction. In this double semion example, note that the $2\nd$ order terms correspond to exact symmetries of the model, while the $6\th$ order terms do not.
		
		The Hamiltonian we presented is 2-local if a code gadget is considered as one system. However, if we consider the virtual qubits to be distinct particles, we would need to use further perturbation gadgets techniques as outlined in \Sref{s:reducecomplex} to reduce the Hamiltonian interactions to 2-local.


\begin{acknowledgments}
	Finally, we thank Steve Flammia, Ilai Schwarz, and Dominic Williamson for interesting and fruitful discussions.  This work is supported by the ARC via the Centre of Excellence in Engineered Quantum Systems (EQuS) project number CE110001013, by the ERC grant QFTCMPS, and by the cluster of excellence EXC 201 Quantum Engineering and Space-Time Research.
\end{acknowledgments}


\end{document}